\begin{document}

\newtheorem{theorem}{Theorem}[section]
\newtheorem{lemma}[theorem]{Lemma}
\newtheorem{pr}[theorem]{Proposition}
\theoremstyle{definition}
\newtheorem{definition}[theorem]{Definition}
\newtheorem{example}[theorem]{Example}
\newtheorem{xca}[theorem]{Exercise}
\newtheorem{remark}[theorem]{Remark}
\newtheorem{cor}[theorem]{Corollary}
\newtheorem{note}[theorem]{Note}
\newtheorem{Corollary}[theorem]{Corollary}

\begin{center}
\textrm{\Large{Local Lyapunov Analysis via Micro-Ensembles: Finite-Time Lyapunov Exponent Estimation and KNN-Based Predictive Comparison in Complex-Valued BAM Neural Networks}}\\
\hspace{0.3 cm}    Yazhini Muruganantham $^{a}$,   Andrei Velichko $^{b*}$, Samidurai Rajendran $^{a*}$ \hspace{0.1cm} \hspace{0.1 cm}\\
$^{a}$Department of Mathematics, Thiruvalluvar University,  Vellore-632115, India.\\
$^{b}$ Institute of Physics and Technology, Petrozavodsk State University, Petrozavodsk, Russia, Russia- 185910.
\end{center}
\let\thefootnote\relax\footnote{  yazhinimuzhi6@gmail.com(M.Yazhini)  \text{*Corresponding authors:}   velichkogf@gmail.com (A. Velichko), samidurair@gmail.com (R. Samidurai).}
\textbf{Abstract}\\
Complex-valued bidirectional associative memory (BAM) neural networks with fractional-order dynamics and delays can exhibit transient instabilities that degrade synchronization and short-horizon predictability. This paper develops a unified analytical and data-driven framework to assess stability, synchronization, and predictability in such networks. First, using Caputo fractional calculus and Lyapunov-Mittag-Leffler techniques, we derive sufficient conditions for global Mittag--Leffler synchronization of a drive-response BAM pair under a linear error-feedback controller and obtain an explicit time-to-tolerance bound. Second, to quantify local transient instability from finite trajectory data, we propose a micro-ensemble finite-time Lyapunov exponent (FTLE) estimator based on the geometric-mean growth of small perturbations over short windows, avoiding variational equations. We further introduce k-nearest-neighbor prediction-error Lyapunov proxies (full-state and modulus-based) to connect local instability to forecasting performance. Numerical experiments on fractional-order complex-valued BAM benchmarks confirm effective synchronization under the proposed control and demonstrate a clear correspondence between FTLE levels and prediction errors. The resulting framework provides practical, reproducible diagnostics for complex-valued neural systems in data-limited settings.\\

\textbf{Keywords}\\
complex-valued BAM neural networks; fractional-order systems; global Mittag-Leffler synchronization; Lyapunov stability; linear error-feedback control; finite-time Lyapunov exponent; micro-ensembles; kNN prediction error.

\section{Introduction}
The local finite-time Lyapunov exponent (FTLE) is a fundamental tool for analyzing nonlinear dynamical systems, providing a time-dependent and spatially resolved measure of trajectory sensitivity over finite time intervals. Unlike classical Lyapunov exponents, which describe average divergence rates in the infinite-time limit, the FTLE captures transient instabilities, short-term chaotic behavior, and transport phenomena. These features are particularly important in high-dimensional, nonstationary, or data-limited systems \cite{3,4,7,11}.
The FTLE is computed by tracking the evolution of infinitesimal perturbations in initial conditions, resulting in scalar fields that highlight regions of maximal divergence or convergence. These regions often correspond to Lagrangian coherent structures (LCS), which organize the underlying dynamics and act as transport barriers in phase space \cite{3,4,11,15}.

Applications of FTLE span fluid dynamics, atmospheric science, nonlinear control, and astrophysics, where it has been used for coherent structure identification, uncertainty quantification, and stability assessment under perturbations \cite{1,4,7,13,14}. Recent developments have extended FTLE analysis to high-dimensional systems, stochastic dynamics, non-Euclidean manifolds, and noisy data, significantly broadening its applicability \cite{2,4,5,7}. From a numerical standpoint, however, classical FTLE estimation often relies on long trajectories or explicit variational equations, which limits its use in complex neural systems and short data windows.\\

Complex-valued bidirectional associative memory (BAM) neural networks have attracted sustained attention due to their relevance in signal processing, associative memory, control, and phase-sensitive neural modeling \cite{24,25,26,27}. In such applications, system performance depends not only on asymptotic stability but also on transient dynamical behavior, particularly under fractional-order dynamics where memory effects play a fundamental role. Short-time instability, transient chaos, and loss of predictability can significantly degrade system reliability, motivating the need for effective numerical diagnostics that go beyond classical Lyapunov-based analysis \cite{7,25,26}. CVBAMNNs are used for their ability to naturally capture amplitude and phase dynamics, offering a compact representation of oscillatory and coupled systems. Combined with FO  dynamics, they incorporate memory effects, enabling more realistic modeling of nonlinear behavior such as synchronization and transient instability.\\

Fractional-order complex-valued neural networks (FCVNNs), including fractional-order BAM architectures, are capable of modeling amplitude-phase interactions and long-memory effects inherent in many real-world systems. However, their nonlinearity, time delays, and fractional derivatives pose major challenges for stability and synchronization analysis \cite{7,25,26,27}. Lyapunov-based methods, including fractional-order extensions such as Mittag-Leffler stability theory and Lyapunov-Krasovskii functionals, provide powerful analytical tools to establish global stability and synchronization properties \cite{6,8,9}. Nevertheless, these approaches do not directly quantify local divergence, transient instability, or short-horizon predictability from finite trajectory data. Fractional-order modeling is employed due to its ability to capture memory and hereditary characteristics of neural systems, which are not adequately represented by integer-order models. In particular, the Caputo fractional derivative is adopted for its compatibility with physically interpretable initial conditions and its effectiveness in describing nonlocal dynamics.
To address this gap, finite-time Lyapunov exponents have emerged as an effective numerical indicator for characterizing local instability and time-dependent dynamical regimes in nonlinear systems \cite{3,4,11}. From a numerical and applied perspective, an important practical question arises: how can local instability and short-term predictability in complex-valued fractional BAM neural networks be reliably quantified using finite data? Addressing this question is essential for regime identification, parameter sensitivity analysis, and numerical validation of theoretical stability results. The integration of Lyapunov-based analysis with data-driven Lyapunov proxies is motivated by their complementary roles. While analytical Lyapunov methods provide rigorous global stability guarantees, data-driven approaches such as FTLE offer local, finite-time insights into transient dynamics and predictability along trajectories.\\

In this work, we employ a micro-ensemble-based FTLE estimation strategy, which focuses on localized perturbations around reference trajectories, enabling robust finite-time instability diagnostics from short data segments \cite{3,7,11}. In parallel, we adopt a data-driven k-nearest-neighbor (kNN) prediction-error Lyapunov proxy, originally proposed in \cite{23}, and adapt it to a micro-ensemble setting and to complex-valued fractional-order BAM dynamics. Moreover, we introduce a modulus-based variant that reduces output dimensionality and improves robustness in complex-valued systems. By assessing how small differences in initial conditions grow among neighboring trajectories, these proxies efficiently estimate local Lyapunov behavior in high-dimensional settings where traditional variational methods are infeasible \cite{8,9,10,23}.
The FTLE estimation and kNN prediction error are combined into a unified numerical diagnostic work flow. The micro-ensemble FTLE characterizes local divergence and transient instability, while kNN-based prediction accuracy directly quantifies short-horizon predictability along the same trajectories. By comparing these two measures, we demonstrate how transient instability impacts prediction performance in complex-valued fractional BAM neural networks, providing a practical and reproducible numerical tool for applied analysis.
The relevance of FTLE-based diagnostics extends across multiple application domains. In fault-tolerant zeroing neural networks (FTZNNs), finite-time Lyapunov analysis supports rapid convergence under uncertainties and has been successfully applied to tracking control of wheeled mobile manipulators \cite{13}. In stochastic partial differential equations, FTLEs quantify local stability and predict bifurcations, while reduced-order amplitude equations capture dominant mode evolution \cite{14}. Similarly, FTLE distributions in Hamiltonian systems reveal phase-space structures distinguishing regular and chaotic dynamics \cite{15}, and in nonlinear rotating systems, Lyapunov characteristic exponents and FTLE-based visual analysis provide insight into transient sensitivity under uncertainties \cite{16,17,18}. The work entitled \cite{28}' presents a novel optimization framework that integrates tempered fractional calculus into gradient-based learning, offering enhanced theoretical flexibility, improved convergence behavior, and increased robustness in practical learning tasks.  \\
However, most of the existing studies mainly focus on either fractional-order dynamics or complex-valued neural networks separately, and limited attention has been given to their combined effects with data-driven stability analysis. Unlike existing studies, the proposed method not only considers fractional-order and complex-valued dynamics simultaneously, but also integrates data-driven techniques for stability assessment, which enhances robustness and practical applicability.\\ Overall, integrating finite-time Lyapunov analysis, fractional-order complex-valued BAM neural network modeling, and data-driven predictability diagnostics offers a unified framework for stability assessment, synchronization analysis, and short horizon predictability evaluation in complex dynamical systems \cite{15,16,22,25}. This perspective highlights the interplay between local transient instability, global stability guarantees, and practical controllability in nonlinear fractional-order neural dynamics.

\section*{Main Contributions}

\begin{enumerate}
    \item \textbf{Analytical Synchronization Criterion}
    \begin{itemize}
        \item Establishes a finite-time GML synchronization criterion for FCVBAMNNs.
        \item Derives sufficient conditions using Lyapunov functionals, fractional calculus, and Razumikhin-type bounds to ensure synchronization.
        \item Proposes a linear error-feedback control law rigorously proven to achieve finite-time synchronization.
    \end{itemize}

    \item \textbf{Integration of Machine Learning with Fractional Dynamics}
    \begin{itemize}
        \item Introduces data-driven Lyapunov proxies to complement theoretical results.
        \item Develops two approaches: (i) micro-ensemble FTLE estimation for fast local finite-time Lyapunov computation, and (ii) kNN-based prediction-error proxy estimating local instability from short-term forecast errors without variational equations \cite{23}.
    \end{itemize}

    \item \textbf{Numerical Validation}
    \begin{itemize}
        \item Simulations confirm synchronization of real and imaginary components, error suppression, and Lyapunov function decay consistent with Mittag-Leffler stability.
        \item Machine learning proxies reveal local instability patterns, validating analytical results under fractional-order dynamics.
    \end{itemize}

    \item \textbf{Methodological Advancement}
    \begin{itemize}
        \item Bridges model-based fractional Lyapunov theory with data-driven stability estimation, providing a unified framework for synchronization and local instability analysis.
        \item Enhances reproducibility and applicability to high-dimensional, nonlinear systems where traditional Lyapunov exponent computation is infeasible.
    \end{itemize}
\end{enumerate}
 This paper's structure is described as follows: Section 2 presents the system description and notations for fractional-order CVBAMNNs, while in this Section  introduces the problem formulation and preliminary definitions. Sections 3-4 develop the main theoretical results on GML synchronization, finite-time stability, and validate them via numerical simulations. Sections 5 cover local FTLE estimation, kNN-based Lyapunov proxies for data-driven instability analysis, and conclude with key contributions and future directions.
 \section{System Description}
\section*{\textsl{Basic Notations}}

In this paper, the following notations and symbols are used throughout:
 \begin{itemize}
  \item $\mathbb{R}^n$: the set of all $n$-dimensional real vectors.
  \item $\mathbb{C}^n$: the set of all $n$-dimensional complex vectors.
  \item $\mathbb{R}^{n \times m}$, $\mathbb{C}^{n \times m}$: real and complex matrices of size $n \times m$, respectively.
  \item $(\cdot)^{T}$: matrix transpose.
  \item $(\cdot)^{*}$: complex conjugate transpose (Hermitian transpose).
  \item $I_n$: identity matrix of order $n$.
  \item $0_{n \times m}$: zero matrix of size $n \times m$.
  \item $\|x\|$: Euclidean (2-norm) of vector $x \in \mathbb{C}^n$.
\end{itemize}
\subsection{Problem formulation}
\indent  In particular, we employ the Caputo fractional derivative and examine the complex valued BAM neural networks, which are defined by the following fractional differential equations.

\begin{align}\label{FLLE1}
{}^{C}D_{0}^{\alpha} x_{i}(t) &= -a_{i} x_{i}(t)
+ \sum_{j=1}^{m} w_{ij} f_{j}\big(y_{j}(t - \tau(t))\big)
+ M_{i}, \nonumber \\
{}^{C}D_{0}^{\alpha} y_{j}(t) &= -b_{j} y_{j}(t)
+ \sum_{i=1}^{n} v_{ji} g_{i}\big(x_{i}(t - \tau(t))\big)
+ N_{j}.
\end{align}

\indent Where $^{c}D^{\alpha}_{0}x_{i}(t)$  denotes the caputo fractional  order   $0<\alpha<1$, $x=(x_{1},x_{2},\dots,x_{n})^{T} \quad   \in C^{n}$ and $y=(y_{1},y_{2},\dots,y_{m})^{T}\in C^{m}$ are the state vectors,  $A=diag(a_{1},a_{2},\dots,a_{n})^{T}\in R^{n\times n}$ and $B=diag(b_{1},b_{2},\dots,b_{m})^{T}\in R^{m\times m}$  denote the self-feedback connection weight matrix with $a_{i}>0$, $b_{j}>0$; $W=(w_{ij})_{n\times m}\in C^{n\times m}$ and $V=(v_{ji})_{m\times n}\in C^{m\times n}$ are represent the connection weight matrices;  $f_{j}(y_{j}(t)),g_{i}(x_{i}(t)):C^{n}\rightarrow C^{m}$ are activation function with vector inputs and $\tau$ is the time delay, assume the time-varying delay satisfies $\tau(t)\leq \tau_{M}$; $M=(M_{1},M_{2},\dots,M_{n})^{T}\in C^{n}$ and $N=(N_{1},N_{2}\dots,N_{m})^{T}\in C^{m}$  external input.\\
 \indent The system \eqref{FLLE1} is initialized with the following values:
 \begin{align}\label{FLLE2}
x_{i}(s)=&\nu_{i}(s)  \quad s\in [-\tau, 0] \nonumber\\
y_{j}(s)=&\upsilon_{j}(s)
 \end{align}
 where $\nu \in C([-\tau, 0], C^{n})$ and $\upsilon \in C([-\tau, 0], C^{m})$
  or represented as a vector

\begin{align}\label{FLLE3}
{}^{C}D_{0}^{\alpha} x(t) &= -A x(t)
+Wf\big(y(t - \tau(t))\big)
+ M, \nonumber \\
{}^{C}D_{0}^{\alpha} y(t) &= -B y(t)
+ V g\big(x(t - \tau(t))\big)
+ N.
\end{align}

We consider the response system as follows

\begin{align}\label{FLLE4}
{}^{C}D_{0}^{\alpha} \bar{x}_{i}(t) &= -a_{i} \bar{x}_{i}(t)
+ \sum_{j=1}^{m} w_{ij} f_{j}\big(\bar{y}_{j}(t - \tau(t))\big)
+ M_{i}+\mathcal{F} (t), \nonumber\\
{}^{C}D_{0}^{\alpha} \bar{y}_{j}(t) &= -b_{j} \bar{y}_{j}(t)
+ \sum_{i=1}^{n} v_{ji} g_{i}\big(\bar{x}_{i}(t - \tau(t))\big)
+ N_{j} +\mathcal{G}(t).
\end{align}

Where $\mathcal{F}, \mathcal{G}$ is the control laws.\\
\indent Let consider the error system $e_{x}(t)=\bar{x}(t)-x(t)$, $e_{y}(t)=\bar{y}(t)-y(t)$ , $f(e_{y}(t))=f(\bar{y}(t))-f(y(t))$ and $g(e_{x}(t))=g(\bar{x}(t))-g(x(t))$ be the synchronization errors. By utilizing the fractional driving Eq. \eqref{FLLE1} and the fractional responding Eq. \eqref{FLLE2}, the fractional error dynamics can be derived

\begin{align} \label{FLLE5}
{}^{C}D_{0}^{\alpha} e_{x_{i}}(t)
&= -a_{i}\, e_{x_{i}}(t)
  + \sum_{j=1}^{m} w_{ij}\,\Big[f_{j}\big(\bar{y}_{j}(t - \tau(t))\big)-f_{j}\big(y_{j}(t - \tau(t))\big)\Big]
  + \mathcal{F}(t), \nonumber \\[6pt]
{}^{C}D_{0}^{\alpha} e_{y_{j}}(t)
&= -b_{j}\, e_{y_{j}}(t)
  + \sum_{i=1}^{n} v_{ji}\,\Big[g_{i}\big(\bar{x}_{i}(t - \tau(t))\big)-g_{i}\big(x_{i}(t - \tau(t))\big)\Big]
  + \mathcal{G}(t).
\end{align}

\textbf{Control law:}
Let consider the control laws are linear state feedback on each layer:
\begin{align}\label{FLLE6}
 \mathcal{F}=&\phi e_{x}(t)\nonumber\\
 \mathcal{G}=&\varphi e_{y}(t).
\end{align}
The linear error-feedback controller is selected for its simplicity and compatibility with the LMI-based framework, allowing tractable derivation of synchronization conditions. Despite its simple form, it is sufficient to ensure global Mittag-Leffler synchronization. More advanced nonlinear or adaptive control strategies may further improve performance but are beyond the scope of this study.

\hspace{-0.1cm}{\bf Assumption 1:}
The activation functions $f,g: \mathbb{C}^{n}\rightarrow \mathbb{C}^{n}$ are componentwise Lipschitz, ie,there  exist $L_{f}\geq 0$, $L_{g}\geq 0$ such that
\begin{equation*}
  \|f(u)-f(v) \|\leq L_{f}\| u-v\|, \quad \|g(u)-g(v) \|\leq L_{g}\|u-v \|, \quad  \forall u,v\in \mathbb{C}^{n}.
\end{equation*}

\begin{definition}\label{FLLE:Df1}\cite{7}
For a function $\delta:[t_{0},+\infty)\to \mathbb{R}$, the Riemann--Liouville fractional integral of order $\alpha>0$ is defined by
\begin{equation*}
  {}_{t_{0}}I^{\alpha}_{t}\,\delta(t)=\frac{1}{\Gamma(\alpha)}\int_{t_{0}}^{t}(t-s)^{\alpha-1}\,\delta(s)\,ds,\qquad t\ge t_{0}.
\end{equation*}
The Gamma function is given by
\begin{equation*}
  \Gamma(z)=\int_{0}^{+\infty}s^{z-1}e^{-s}\,ds,\qquad \Re(z)>0.
\end{equation*}
Moreover, the fractional integral satisfies the semigroup property
\begin{equation*}
  {}_{t_{0}}I^{\beta}_{t}\,{}_{t_{0}}I^{\alpha}_{t}\,\delta(t)= {}_{t_{0}}I^{\alpha+\beta}_{t}\,\delta(t),\qquad \alpha,\beta>0.
\end{equation*}
\end{definition}

\begin{definition}\label{FLLE:Df2}
For a function $\delta\in C^{m}([t_{0},+\infty),\mathbb{R})$, the Caputo fractional derivative of order $\alpha$ ($m-1<\alpha<m$, $m\in\mathbb{Z}^{+}$) is defined as
\begin{equation*}
  {}^{C}_{t_{0}}D^{\alpha}_{t}\delta(t)=\frac{1}{\Gamma(m-\alpha)}\int_{t_{0}}^{t}(t-s)^{m-\alpha-1}\,\delta^{(m)}(s)\,ds.
\end{equation*}
In particular, for $0<\alpha<1$ one has
\begin{equation*}
  {}^{C}_{t_{0}}D^{\alpha}_{t}\delta(t)=\frac{1}{\Gamma(1-\alpha)}\int_{t_{0}}^{t}(t-s)^{-\alpha}\,\delta'(s)\,ds.
\end{equation*}
\end{definition}

\begin{definition}\label{FLLE:Df3}
(Mittag--Leffler functions) The one-parameter Mittag--Leffler function is
\begin{equation*}
  E_{\alpha}(z)=\sum_{k=0}^{+\infty}\frac{z^{k}}{\Gamma(k\alpha+1)}.
\end{equation*}
The two-parameter Mittag--Leffler function is
\begin{equation*}
  E_{\alpha,\beta}(z)=\sum_{k=0}^{+\infty}\frac{z^{k}}{\Gamma(k\alpha+\beta)},
\end{equation*}
where $z\in\mathbb{C}$ and $\alpha,\beta\in\mathbb{C}$ with $\Re(\alpha)>0$. Clearly, $E_{\alpha}(z)=E_{\alpha,1}(z)$, $E_{0,1}(z)=\frac{1}{1-z}$, and $E_{1,1}(z)=e^{z}$.
\end{definition}

\indent The Laplace transform $L(.)$ of two-parameter ML function $E_{\alpha,\beta}(.)$ is defined as
\begin{equation*}
  L\{t^{\beta-1}E_{\alpha,\beta}(vt^{\alpha})\}=\frac{s^{\alpha-\beta}}{s^{\alpha}-v}, \quad t>0,
\end{equation*}
where $t$ and $s$ are the variable in the time domain and Laplace domain, respectively, $v$ is real number, the real part $Re(s)$ of $s$ is $Re(s)>|v|^{\frac{1}{\alpha}}$.
\begin{lemma}\label{FLLE:L1}\cite{21}
Let $V\in C^{1}([t_{0},+\infty),\mathbb{R})$. Then for $0<\alpha<1$,
\begin{equation*}
  {}^{C}_{t_{0}}D^{\alpha}_{t} |V(t)|\leq \operatorname{sgn}(V(t))\,{}^{C}_{t_{0}}D^{\alpha}_{t} V(t).
\end{equation*}
\end{lemma}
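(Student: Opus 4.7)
The plan is to prove the inequality via the integration-by-parts (Marchaud-type) representation of the Caputo derivative. For any absolutely continuous function $f$ on $[t_0, t]$ and $0 < \alpha < 1$, integration by parts applied to Definition \ref{FLLE:Df2} yields
\begin{equation*}
{}^{C}_{t_0}D^{\alpha}_t f(t) = \frac{1}{\Gamma(1-\alpha)}\left[\frac{f(t) - f(t_0)}{(t-t_0)^{\alpha}} + \alpha\int_{t_0}^{t} \frac{f(t) - f(s)}{(t-s)^{\alpha+1}}\,ds\right],
\end{equation*}
which rewrites the derivative entirely in terms of the differences $f(t) - f(s)$. First I would verify this identity by pairing $u = (t-s)^{-\alpha}$ with $dv = f'(s)\,ds$ and choosing the antiderivative $v = f(s) - f(t)$; the boundary contribution at $s=t$ vanishes because $|f(s)-f(t)|(t-s)^{-\alpha} = O\!\bigl((t-s)^{1-\alpha}\bigr) \to 0$ for $\alpha < 1$.

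Next I would apply this representation both to $f = V$ (admissible since $V \in C^{1}$) and to $f = |V|$ (admissible because $|V|$ is Lipschitz, hence absolutely continuous, despite failing to be $C^1$ on $\{V=0\}$). Subtracting the two expressions gives
\begin{equation*}
\operatorname{sgn}(V(t))\,{}^{C}_{t_0}D^{\alpha}_{t} V(t) - {}^{C}_{t_0}D^{\alpha}_{t} |V(t)| = \frac{1}{\Gamma(1-\alpha)}\left[\frac{A(t_0)}{(t-t_0)^{\alpha}} + \alpha\int_{t_0}^{t} \frac{A(s)}{(t-s)^{\alpha+1}}\,ds\right],
\end{equation*}
where
\begin{equation*}
A(s) := \operatorname{sgn}(V(t))\bigl(V(t)-V(s)\bigr) - \bigl(|V(t)|-|V(s)|\bigr) = |V(s)| - \operatorname{sgn}(V(t))\,V(s),
\end{equation*}
using $\operatorname{sgn}(V(t))\,V(t) = |V(t)|$ (with the convention $\operatorname{sgn}(0)=0$, so the identity holds even at zeros of $V$).

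The concluding step is the elementary pointwise bound $\operatorname{sgn}(V(t))\,V(s) \leq |V(s)|$, which gives $A(s) \geq 0$ for every $s \in [t_{0}, t]$. Since the kernels $(t-t_0)^{-\alpha}$ and $(t-s)^{-\alpha-1}$ are positive and $\Gamma(1-\alpha) > 0$, the right-hand side is non-negative, which is exactly the claimed inequality.

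The main obstacle I anticipate is the non-differentiability of $|V|$ on the zero set of $V$. The Marchaud representation bypasses this cleanly, since it requires only absolute continuity of the function being differentiated. The alternative route, applying Definition \ref{FLLE:Df2} directly to $|V|$ through the a.e.\ identity $|V|'(s) = \operatorname{sgn}(V(s))\,V'(s)$, would demand an additional justification via a smooth regularization such as $V_{\varepsilon} := \sqrt{V^{2}+\varepsilon^{2}}$ and a dominated-convergence limit $\varepsilon\to 0^{+}$; the integration-by-parts approach above makes that technicality unnecessary.
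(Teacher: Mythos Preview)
The paper does not prove this lemma; it is stated with a citation and then used as a black-box tool in the proof of Theorem~\ref{GMLSy}. There is therefore no in-paper argument to compare your proposal against.

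Your argument is correct. The Marchaud (integration-by-parts) representation is the standard device for inequalities of this kind, precisely because it rewrites the Caputo derivative through a nonnegative kernel acting on differences $f(t)-f(s)$, and these differences transform cleanly under $f\mapsto|f|$. You correctly isolated and resolved the only genuine technicality: $|V|$ is Lipschitz but not $C^{1}$, so the Caputo derivative of $|V|$ must be read via absolute continuity, and the boundary term at $s=t$ in the integration by parts must be shown to vanish --- which it does, since $|f(t)-f(s)|\,(t-s)^{-\alpha}=O\bigl((t-s)^{1-\alpha}\bigr)$ for Lipschitz $f$ and $0<\alpha<1$. The remaining pointwise bound $A(s)=|V(s)|-\operatorname{sgn}(V(t))\,V(s)\ge 0$ is immediate. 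This is essentially the textbook argument for the convexity inequality ${}^{C}D^{\alpha}\phi(V)\le\phi'(V)\,{}^{C}D^{\alpha}V$, specialized here to the nonsmooth convex function $\phi=|\cdot|$ with $\phi'$ interpreted as $\operatorname{sgn}$.
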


\begin{lemma}\cite{22}\label{FLLE:L2}
Let $V:[t_{0},+\infty)\to\mathbb{R}$ be continuous and nonnegative. If there exists a constant $P>0$ such that
\begin{equation}\label{FLLE7}
 {}^{C}_{t_{0}}D^{\alpha}_{t} V(t) \leq -P\,V(t), \qquad 0<\alpha<1,\ t\ge t_{0},
\end{equation}
then
\begin{equation}\label{FLLE8}
V(t)\leq V(t_{0})\,E_{\alpha}\!\big(-P\,(t-t_{0})^{\alpha}\big), \qquad t\ge t_{0}.
\end{equation}
\end{lemma}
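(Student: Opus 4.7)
\medskip
\noindent\textbf{Proof proposal.} The plan is to reduce the differential inequality to an equality by introducing a nonnegative slack, then use the Laplace-transform machinery that was recalled just before the lemma statement. Without loss of generality take $t_0=0$. Define
\begin{equation*}
  m(t) \;:=\; -\,{}^{C}_{0}D^{\alpha}_{t} V(t)\;-\;P\,V(t) \;\geq\; 0,
\end{equation*}
so that $V$ solves the Caputo fractional initial-value problem
${}^{C}_{0}D^{\alpha}_{t} V(t) + P\,V(t) = -m(t)$ with initial datum $V(0)$. Because $V$ is continuous and nonnegative and $m$ is locally integrable, both $V$ and $m$ are of at most polynomial growth over bounded horizons, so the Laplace transform is well defined on some right half-plane.

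Applying the Laplace transform and using $\mathcal{L}\{{}^{C}D^{\alpha}V\}(s)=s^{\alpha}\hat V(s)-s^{\alpha-1}V(0)$ yields
\begin{equation*}
  \hat V(s) \;=\; \frac{s^{\alpha-1}}{s^{\alpha}+P}\,V(0) \;-\; \frac{1}{s^{\alpha}+P}\,\hat m(s).
\end{equation*}
Inverting term by term with the identity $\mathcal{L}\{t^{\beta-1}E_{\alpha,\beta}(-Pt^{\alpha})\}=s^{\alpha-\beta}/(s^{\alpha}+P)$ (with $\beta=1$ for the first term and $\beta=\alpha$ for the kernel), and applying the convolution theorem, I obtain the fractional variation-of-constants representation
\begin{equation*}
  V(t) \;=\; V(0)\,E_{\alpha}\!\bigl(-P\,t^{\alpha}\bigr) \;-\; \int_{0}^{t} (t-s)^{\alpha-1}\,E_{\alpha,\alpha}\!\bigl(-P\,(t-s)^{\alpha}\bigr)\,m(s)\,ds.
\end{equation*}

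The claimed bound then follows if the convolution term is nonnegative. Since $m(s)\ge 0$ and $(t-s)^{\alpha-1}\ge 0$, it suffices to know that $E_{\alpha,\alpha}(-P\tau^{\alpha})\ge 0$ for $\tau\ge 0$. This is the main technical ingredient and also the step I expect to be the real obstacle: unlike $E_{\alpha}(-x)$, the positivity of $E_{\alpha,\alpha}(-x)$ is not visible from its power series (which is alternating) and requires the Pollard--Schneider complete-monotonicity theorem, which asserts that for $0<\alpha\le 1$ and $\beta\ge\alpha$ the function $x\mapsto E_{\alpha,\beta}(-x)$ is completely monotone on $[0,\infty)$, hence in particular nonnegative. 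Invoking this (either as a cited fact or through the standard Bernstein/Laplace integral representation of $E_{\alpha,\alpha}(-x)$), the integral is nonnegative, so
\begin{equation*}
  V(t)\;\le\; V(0)\,E_{\alpha}\!\bigl(-P\,t^{\alpha}\bigr),\qquad t\ge 0,
\end{equation*}
which, after translating back by $t_0$, is exactly \eqref{FLLE8}. A minor auxiliary check is that $V$ admits the Laplace transform on some right half-plane; this can be handled by first proving the estimate on an arbitrary finite interval $[0,T]$ where integrability is automatic from continuity, and then letting $T\to\infty$.
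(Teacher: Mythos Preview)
The paper does not supply its own proof of this lemma; it is stated with a citation to \cite{22} and then invoked as a black box in the proof of Theorem~\ref{GMLSy}. There is therefore nothing in the paper to compare your argument against.

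That said, your proof is correct and is precisely the standard route for this fractional comparison principle: introduce the nonnegative slack $m$, pass to the Laplace domain using the identity $\mathcal{L}\{{}^{C}D^{\alpha}V\}(s)=s^{\alpha}\hat V(s)-s^{\alpha-1}V(0)$, invert termwise via the Mittag--Leffler transform pair recalled just before the lemma, and then discard the convolution term by appealing to the complete monotonicity of $E_{\alpha,\alpha}(-x)$ on $[0,\infty)$ for $0<\alpha\le 1$. You are right that the nonnegativity of $E_{\alpha,\alpha}(-x)$ is the only nontrivial external input, and that it is not readable from the series; citing the Pollard--Schneider theorem (or the Bernstein representation) is the appropriate way to handle it. One small tightening: instead of arguing that $V$ and $m$ have Laplace transforms via growth bounds, you can verify the variation-of-constants formula directly by applying ${}_{0}I^{\alpha}_{t}$ to both sides of ${}^{C}D^{\alpha}V+PV=-m$ and using the semigroup property from Definition~\ref{FLLE:Df1}; this works on any finite interval and sidesteps the Laplace-transform existence issue entirely.
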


\subsection{Existence and uniqueness of the equilibrium point}
In this study, $[x^{*},y^{*}]^{T}$ denotes an equilibrium point of system~\eqref{FLLE1}.
For convenience, we shift the equilibrium to the origin (this causes no loss of generality in the stability
and synchronization analysis).
Define the translated variables
\[
u(t)=x(t)-x^{*},\qquad v(t)=y(t)-y^{*}.
\]
Then system~\eqref{FLLE1} can be rewritten in the equivalent form
\begin{align}\label{FLLE9}
\left\{
\begin{aligned}
{}^{C}_{t_{0}}D^{\alpha}_{t} u(t) &= -A\,u(t)+W\,\hat{f}\big(v(t-\tau(t))\big),\\
{}^{C}_{t_{0}}D^{\alpha}_{t} v(t) &= -B\,v(t)+V\,\hat{g}\big(u(t-\tau(t))\big),
\end{aligned}
\right.
\end{align}
where $\hat{f}(v)=f(v+y^{*})-f(y^{*})$ and $\hat{g}(u)=g(u+x^{*})-g(x^{*})$.
The shifted nonlinearities $\hat{f}(\cdot)$ and $\hat{g}(\cdot)$ inherit the same Lipschitz constants as
$f(\cdot)$ and $g(\cdot)$ (Assumption~1). Therefore, the origin $(u,v)=(0,0)$ is an equilibrium of~\eqref{FLLE9},
and the stability/synchronization analysis can be performed at the origin.
For notational simplicity, we drop the hats and rename $(u,v)$ back to $(x,y)$, obtaining
\begin{align}\label{FLLE10}
\left\{
\begin{aligned}
{}^{C}_{t_{0}}D^{\alpha}_{t} x(t) &= -A\,x(t)+W\,f\big(y(t-\tau(t))\big),\\
{}^{C}_{t_{0}}D^{\alpha}_{t} y(t) &= -B\,y(t)+V\,g\big(x(t-\tau(t))\big).
\end{aligned}
\right.
\end{align}
In the next section, we establish a sufficient criterion guaranteeing global Mittag--Leffler synchronization for the
corresponding drive/response networks.
\section{Main result}

\indent Now, let us establish the sufficient criterion for the GML synchronization concerning the synchronization error system (22).
 \begin{definition}\cite{7}\label{SAY}
The drive system~\eqref{FLLE1} and the response system~\eqref{FLLE4} (under the controls $\mathcal{F}(t)$ and $\mathcal{G}(t)$)
are said to achieve global Mittag--Leffler (GML) synchronization if there exist constants $C>0$ and $P>0$ such that
\begin{equation}\label{FLLE11}
  \|e_{x}(t)\|+\|e_{y}(t)\| \le C\,E_{\alpha}\!\big(-P\,t^{\alpha}\big),\qquad t\ge 0,
\end{equation}
where $e_x(t)=\bar{x}(t)-x(t)$ and $e_y(t)=\bar{y}(t)-y(t)$.
\end{definition}

\indent Now, we will offer the necessary condition of the GML synchronization for the synchronization for the synchronization error system \eqref{FLLE5}.\\
\textbf{\textsl{Finite time synchronization}}
\begin{theorem}\label{GMLSy}
Suppose Assumption~1 holds. Consider the error system \eqref{FLLE5} under the linear error-feedback control \eqref{FLLE6}.
Define
\begin{align}\label{FLLE12}
\eta_{1} &= \min_{1\le i\le n}\big(a_{i}-\phi\big), \qquad
\eta_{2} = \min_{1\le j\le m}\big(b_{j}-\varphi\big), \nonumber\\
\Delta_{1} &= \max_{1\le i\le n}\sum_{j=1}^{m} |w_{ij}|\,L_{f}, \qquad
\Delta_{2} = \max_{1\le j\le m}\sum_{i=1}^{n} |v_{ji}|\,L_{g},
\end{align}
and let
\[
P_{1}=\min\{\eta_{1},\eta_{2}\},\qquad
P_{2}=\max\{\Delta_{1},\Delta_{2}\},\qquad
P=P_{1}-P_{2}.
\]
If $P>0$ (equivalently, $P_{1}>P_{2}$), then the drive system \eqref{FLLE1} and response system \eqref{FLLE4}
achieve global Mittag--Leffler (GML) synchronization in the sense of Definition~\ref{SAY}.

Moreover, for any tolerance $\varepsilon\in(0,V(0))$ the time required to reach $V(t)\le\varepsilon$ can be conservatively estimated as
\begin{equation}\label{FLLE13}
T_{\varepsilon}\le \left[\frac{\Gamma(1+\alpha)}{P}\left(\frac{V(0)}{\varepsilon}-1\right)\right]^{1/\alpha},
\end{equation}
where $V(t)$ is the Lyapunov function defined in \eqref{FLLE14}.
\end{theorem}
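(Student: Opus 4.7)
The plan is to set up a scalar Lyapunov functional on the combined error, derive a fractional differential inequality of the form ${}^{C}D_{0}^{\alpha}V(t)\le -P\,V(t)$ by absorbing the delayed coupling terms via a Razumikhin-type comparison, and then close the argument with Lemma~\ref{FLLE:L2} to obtain pure Mittag--Leffler decay. The natural candidate is the sum of moduli
\[
V(t)=\sum_{i=1}^{n}|e_{x_{i}}(t)|+\sum_{j=1}^{m}|e_{y_{j}}(t)|,
\]
which is real, nonnegative, and equivalent up to dimension-dependent constants to $\|e_{x}(t)\|+\|e_{y}(t)\|$ appearing in Definition~\ref{SAY}. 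Since the error components are complex-valued, I would apply a standard complex-variable extension of Lemma~\ref{FLLE:L1}, namely
\[
{}^{C}D_{0}^{\alpha}|e_{x_{i}}(t)|\le \operatorname{Re}\!\left(\tfrac{\overline{e_{x_{i}}(t)}}{|e_{x_{i}}(t)|}\,{}^{C}D_{0}^{\alpha}e_{x_{i}}(t)\right)\le \bigl|{}^{C}D_{0}^{\alpha}e_{x_{i}}(t)\bigr|,
\]
valid when $e_{x_{i}}(t)\neq 0$ and extended to the zero set by a standard approximation argument.

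Substituting the controlled error dynamics \eqref{FLLE5}--\eqref{FLLE6} and invoking Assumption~1 for the Lipschitz bound on each coupling sum, I would obtain for each component
\[
{}^{C}D_{0}^{\alpha}|e_{x_{i}}(t)|\le -(a_{i}-\phi)|e_{x_{i}}(t)|+\sum_{j=1}^{m}|w_{ij}|\,L_{f}\,|e_{y_{j}}(t-\tau(t))|,
\]
and analogously for $|e_{y_{j}}(t)|$. Summing over $i$ and $j$ and using the definitions of $\eta_{1},\eta_{2},\Delta_{1},\Delta_{2}$ in \eqref{FLLE12}, the inequality collapses to
\[
{}^{C}D_{0}^{\alpha}V(t)\le -P_{1}\,V(t)+P_{2}\,V(t-\tau(t)).
\]
To eliminate the delayed term I would invoke a fractional Razumikhin-type argument: on any interval where $V(t)$ attains a running supremum, one has $V(t-\tau(t))\le V(t)$, so the inequality tightens to ${}^{C}D_{0}^{\alpha}V(t)\le -(P_{1}-P_{2})\,V(t)=-P\,V(t)$. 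The hypothesis $P>0$ then lets me apply Lemma~\ref{FLLE:L2} and conclude $V(t)\le V(0)\,E_{\alpha}(-P t^{\alpha})$, which, combined with the equivalence between $V$ and $\|e_{x}\|+\|e_{y}\|$, yields the GML bound \eqref{FLLE11} for a suitable constant $C>0$.

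For the explicit time-to-tolerance estimate, I would apply the standard upper bound $E_{\alpha}(-z)\le 1/(1+z/\Gamma(1+\alpha))$ valid for $0<\alpha<1$ and $z\ge 0$; combining this with the Mittag--Leffler decay of $V$ and requiring $V(t)\le\varepsilon$ gives the inequality $1+P t^{\alpha}/\Gamma(1+\alpha)\ge V(0)/\varepsilon$, which rearranges to \eqref{FLLE13}. The hard part will be making the Razumikhin step rigorous in the Caputo setting, since the naive integer-order argument does not transfer verbatim to fractional derivatives, together with handling the non-smoothness of the modulus at zeros of the complex error components; once these two technicalities are settled, the remaining steps amount to careful but routine algebraic bookkeeping.
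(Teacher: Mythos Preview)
Your proposal is correct and follows essentially the same route as the paper: the same $\ell_{1}$-type Lyapunov function $V(t)=\sum_{i}|e_{x_i}|+\sum_{j}|e_{y_j}|$, the sign/modulus inequality from Lemma~\ref{FLLE:L1}, the Lipschitz bound from Assumption~1 to reach ${}^{C}D_{0}^{\alpha}V\le -P_{1}V+P_{2}V(t-\tau(t))$, a Razumikhin comparison to collapse the delayed term, Lemma~\ref{FLLE:L2} for the Mittag--Leffler decay, and the same $E_{\alpha}(-x)\le 1/(1+x/\Gamma(1+\alpha))$ bound for \eqref{FLLE13}. If anything, you are more explicit than the paper about the complex-valued extension of Lemma~\ref{FLLE:L1}, the norm equivalence needed to match Definition~\ref{SAY}, and the delicacy of the fractional Razumikhin step; the paper simply invokes these without further comment.
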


\begin{proof}
Consider the Lyapunov function
\begin{equation}\label{FLLE14}
  V(t)=\sum_{i=1}^{n} \|e_{x_i}(t)\| + \sum_{j=1}^{m}\|e_{y_j}(t)\|.
\end{equation}
Using Lemma~\ref{FLLE:L1} and the error dynamics \eqref{FLLE5}, we obtain
\begin{align*}
{}^{C}_{t_{0}}D^{\alpha}_{t} V(t)\leq &
\sum_{i=1}^{n}(-a_{i}+\phi)\|e_{x_i}(t)\|
      +\sum_{j=1}^{m}(-b_{j}+\varphi)\|e_{y_j}(t)\|\\
&\quad +\sum_{i=1}^{n}\sum_{j=1}^{m}|w_{ij}|\,\Big\| f_{j}\big(\bar{y}_{j}(t-\tau(t))\big)-f_{j}\big(y_{j}(t-\tau(t))\big)\Big\|\\
&\quad
      +\sum_{j=1}^{m}\sum_{i=1}^{n}|v_{ji}|\,\Big\| g_{i}\big(\bar{x}_{i}(t-\tau(t))\big)-g_{i}\big(x_{i}(t-\tau(t))\big)\Big\|.
\end{align*}
By Assumption~1 (componentwise Lipschitz continuity), we have
\[
\Big\| f_{j}\big(\bar{y}_{j}(t-\tau(t))\big)-f_{j}\big(y_{j}(t-\tau(t))\big)\Big\|\le L_{f}\|e_{y_j}(t-\tau(t))\|,\]
\[
\Big\| g_{i}\big(\bar{x}_{i}(t-\tau(t))\big)-g_{i}\big(x_{i}(t-\tau(t))\big)\Big\|\le L_{g}\|e_{x_i}(t-\tau(t))\|.
\]
Hence,
\begin{equation}\label{FLLE15}
{}^{C}_{t_{0}}D^{\alpha}_{t} V(t)
\le -P_{1}\,V(t)+P_{2}\,\sup_{s\in[t-\tau_{M},\,t]}V(s),
\end{equation}
The delayed terms are bounded using the Lipschitz property and the Razumikhin condition, which allows replacing delayed states by the supremum of the Lyapunov function over the delay interval.
where $\tau_{M}>0$ is such that $\tau(t)\le\tau_{M}$ for all $t\ge 0$ and $P_{1},P_{2}$ are defined in \eqref{FLLE12}.
Under the standard Razumikhin condition $\sup_{s\in[t-\tau_{M},\,t]}V(s)\le V(t)$, inequality \eqref{FLLE15} implies
\[
{}^{C}_{t_{0}}D^{\alpha}_{t} V(t)\le -P\,V(t), \qquad P=P_{1}-P_{2}>0.
\]
Applying Lemma~\ref{FLLE:L2} yields
\begin{equation}\label{FLLE16}
V(t)\le V(0)\,E_{\alpha}\!\big(-P\,t^{\alpha}\big),\qquad t\ge 0.
\end{equation}
Therefore $V(t)\to 0$ as $t\to\infty$, which proves global Mittag--Leffler synchronization.

Finally, to obtain an explicit conservative time-to-tolerance estimate, we use the standard bound for $0<\alpha<1$ and $x\ge 0$,
\[
E_{\alpha}(-x)\le \frac{1}{1+\dfrac{x}{\Gamma(1+\alpha)}}.
\]
Setting $x=P\,t^{\alpha}$ in \eqref{FLLE16} and solving $V(t)\le\varepsilon$ gives \eqref{FLLE13}.
\end{proof}
\begin{remark}
The condition \(P > 0\) ensures global Mittag--Leffler synchronization and asymptotic contraction of the synchronization error. Nevertheless, this does not exclude the presence of transient local divergence along system trajectories. Such short-term instability is captured by finite-time Lyapunov exponents (FTLE), which may exhibit positive values even when the global synchronization condition holds. Therefore, FTLE provides a complementary local characterization of system dynamics that is not directly reflected in the analytical condition \(P > 0\).
\end{remark}
\begin{remark}
The synchronization results established in Theorem 3.2 guarantee asymptotic convergence of the error system in the sense of Mittag-Leffler stability. This implies that the synchronization error converges to zero as $t \to \infty$.  It is important to note that finite-time convergence is not achieved in the present framework. The Mittag-Leffler decay provides a generalized form of exponential stability, which is standard for fractional-order systems.
\end{remark}
\begin{remark}
The synchronization condition established in Theorem 3.2 depends explicitly on key system parameters, including the fractional order $\alpha$, the time-varying delay $\tau(t)$, and the controller gains $\phi$ and $\varphi$. Variations in these parameters influence both the convergence rate and transient dynamics of the system. In particular, smaller values of $\alpha$ introduce stronger memory effects, resulting in slower Mittag--Leffler convergence and prolonged transient behavior. Larger delays increase the contribution of the delayed term in the Razumikhin condition, effectively reducing the stability margin $P = P_1 - P_2$, which may slow down synchronization. On the other hand, appropriately chosen controller gains enhance the damping effect and improve convergence speed by increasing $P_1$. These parameter variations are also reflected in the finite-time Lyapunov exponent (FTLE) profiles, where changes in $\alpha$, $\tau(t)$, and control gains affect the magnitude and duration of transient instability.  Therefore, while the condition $P > 0$ guarantees global Mittag--Leffler synchronization, the transient performance and local stability characteristics remain sensitive to system parameters.
\end{remark}

  \begin{remark}
  Earlier studies such as \cite{21} realized finite-time convergence in fractional-order neural networks by employing feedback controllers whose gains explicitly depend on the system's delay. Such approaches require accurate knowledge of the delay value. In many real applications, however, obtaining precise delay measurements is challenging and sometimes not feasible at all. To avoid this limitation, the present work develops synchronization controllers for FCVBAMNNs that do not rely on delay-dependent terms, thereby eliminating the need for delay measurement altogether.
  \end{remark}
  \section{Numerical Simulation}
  To demonstrate the effectiveness of the synchronization criterion established in Theorem \ref{GMLSy} and the machine learning validation framework of Theorem  \ref{GMLSy}, a numerical example of the FOCVBAMNNs is presented.\\
  To ensure numerical transparency, reproducibility, and practical relevance, this section presents the numerical diagnostic workflow used to assess local instability and short-horizon predictability in complex-valued fractional-order BAM neural networks. The workflow integrates micro-ensemble-based FTLE estimation with kNN prediction-error analysis, forming a coherent finite-data diagnostic framework.  All simulation results and figures presented in this paper are generated using codes developed by the authors based on the proposed model and methodology.\\
  \textbf{\textsl{Example:}}
  Given the following FCVBAMNNs as the drive system:

\begin{align}\label{FLLE21}
{}^{C}D_{0}^{\alpha} x_{i}(t) &= -a_{i} x_{i}(t)
+ \sum_{j=1}^{m} w_{ij} f_{j}\big(y_{j}(t - \tau(t))\big)
+ M_{i}, \nonumber \\
{}^{C}D_{0}^{\alpha} y_{j}(t) &= -b_{j} y_{j}(t)
+ \sum_{i=1}^{n} v_{ji} g_{i}\big(x_{i}(t - \tau(t))\big)
+ N_{j}.
\end{align}
Where $\alpha =0.9$, $\tau=0.1$, represent the time varying delay, the activation function are consider as $f_{j}(y_{j}(t))=0.8\tanh(z)$, $g_{i}(x_{i}(t))=0.9\tanh(z)$, for $i=j=1,2$. As a result, the drive system \eqref{FLLE21} is simulated with the initial conditions
$x(0) = [0.1+0.05i,\; 0.2-0.1i]^{T}$ and
$y(0) = [-0.1+0.02i,\; 0.1-0.05i]^{T}$. The resulting response system is:
\begin{align}\label{FLLE22}
{}^{C}D_{0}^{\alpha} \bar{x}_{i}(t) &= -a_{i} \bar{x}_{i}(t)
+ \sum_{j=1}^{m} w_{ij} f_{j}\big(\bar{y}_{j}(t - \tau(t))\big)
+ M_{i}+\mathcal{F} (t),\nonumber \\
{}^{C}D_{0}^{\alpha} \bar{y}_{j}(t) &= -b_{j} \bar{y}_{j}(t)
+ \sum_{i=1}^{n} v_{ji} g_{i}\big(\bar{x}_{i}(t - \tau(t))\big)
+ N_{j} +\mathcal{G}(t).
\end{align}
The definition of controller is proved for
\begin{align*}
 \mathcal{F}=&\phi e_{x}(t)\nonumber\\
 \mathcal{G}=&\varphi e_{y}(t)
\end{align*}
Based on the fractional drive system \eqref{FLLE21} and the fractional response system \eqref{FLLE22}, the corresponding fractional error (discrepancy) system is obtained as follows.
\begin{align} \label{FLLE23}
{}^{C}D_{0}^{\alpha} e_{x_{i}}(t)
&= -a_{i}\, e_{x_{i}}(t)
  + \sum_{j=1}^{m} w_{ij}\, f_{j}\big(e_{y_{j}}(t - \tau(t))\big)
  + \mathcal{F}(t), \nonumber \\[6pt]
{}^{C}D_{0}^{\alpha} e_{y_{j}}(t)
&= -b_{j}\, e_{y_{j}}(t)
  + \sum_{i=1}^{n} v_{ji}\, g_{i}\big(e_{x_{i}}(t - \tau(t))\big)
  + \mathcal{G}(t).
\end{align}
The parameters are specified as follows:
\begin{equation*}
a_{i}=  \begin{bmatrix}
    3 & 0\\
    0 & 2.5
  \end{bmatrix}, b_{j}=\begin{bmatrix}
                         2.8 & 0 \\
                         0 & 3.2
                       \end{bmatrix}, w_{ij}=\begin{bmatrix}
                                               0.2+0.1i & -0.05+0.02i \\
                                               0.15-0.03i & 0.04+0.08i
                                             \end{bmatrix},
                                             \end{equation*}
                                             \begin{equation*}
                                             v_{ji}=\begin{bmatrix}
                                                                     0.1+0.05i & -0.2+0.03i \\
                                                                     -0.08i & 0.09+0.02i
                                                                   \end{bmatrix}
\end{equation*}

Let the controller be $\phi=0.5$, $\varphi=0.4$.  Upon completing the analysis in Theorem \ref{GMLSy}, the result is $ P_{2}=\max\{ 0.4159, 0.3669\}=0.4159>0$, $P_{1}=\min \{2.0, 2.4 \}=2.0>0$  with $P_{1}>P_{2}>0$. Thus, condition \eqref{FLLE22} holds. By Theorem \ref{GMLSy}, the system described in \eqref{FLLE23} realizes GML synchronization. These findings demonstrate that the selected control parameters successfully govern the system behavior, guaranteeing synchronization under the specified stability criteria. The settling time is evaluate as $T_{1}^{*}=1.257$.\\
\indent  This fig. \ref{fig:lle-a} shows that the  compares the real component of the first state between the master and response systems.
Without control, the trajectories diverge, while the controller forces rapid convergence. The behavior confirms effective synchronization in the real domain. This fig. \ref{fig:lle-b} shows that the  imaginary portion of the first state exhibits mismatch in the uncontrolled case.
Applying the controller ensures the response system tracks the master accurately. The convergence demonstrates successful synchronization of the imaginary component. In fig. \ref{fig:lle-c} the second states real part shows significant deviation when no control is applied.
With the controller, the trajectories align closely and converge smoothly. This validates the controller's ability to regulate higher order states.
Fig.\ref{fig:lle-d} uncontrolled dynamics lead to discrepancies in the imaginary portion of the second state. The controlled response precisely matches the drive system trajectory. This verifies consistent synchronization across complex-valued components.  Fig. \ref{fig:lle-e} the error remains non-zero throughout the simulation in the absence of control. This indicates that the master and response systems cannot synchronize naturally. The plot highlights the necessity of feedback control for synchronization. Fig. \ref{fig:lle-f} with control applied, the synchronization error rapidly converges to zero. This demonstrates successful tracking and error suppression. The result confirms the efficacy of the proposed control law. In fig. \ref{fig:lle-g} the Lyapunov function decreases monotonically, consistent with theoretical predictions. The curve stays within the Mittag-Leffler bound, verifying fractional-order stability. This confirms the satisfaction of the conditions in Theorem \ref{GMLSy}. The fig. \ref{fig:lle-h} shows the temporal evolution of neuron outputs in the fractional-order BAM network. Fluctuations reflect nonlinear and fractional-order dynamics of the system. The trajectory showcases the complex-valued behavior under the chosen parameters.

\begin{figure*}[!htbp]
\centering

\subfloat[Real part of $x(t)$ for the drive and response systems\label{fig:lle-a}]{
  \includegraphics[width=0.45\textwidth]{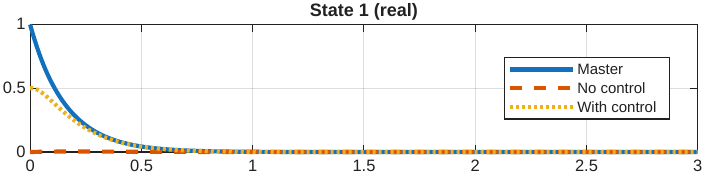}

}\hfill
\subfloat[ Imaginary part of $x(t)$   \label{fig:lle-b}]{
  \includegraphics[width=0.45\textwidth]{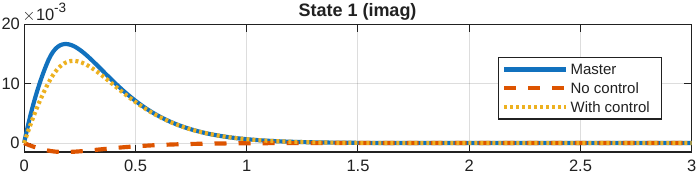}

}

\vspace{3mm}

\subfloat[ Real part of $y(t)$ \label{fig:lle-c}]{
  \includegraphics[width=0.45\textwidth]{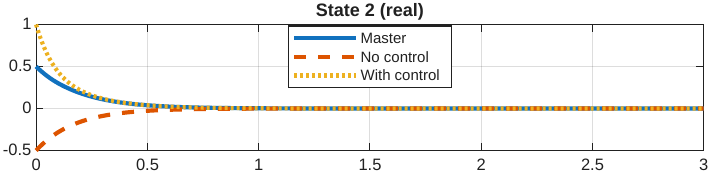}

}\hfill
\subfloat[ Imaginary part of $y(t)$  \label{fig:lle-d}]{
  \includegraphics[width=0.45\textwidth]{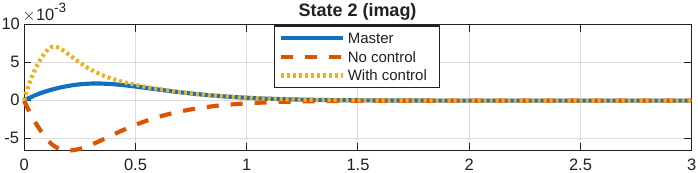}

}

\vspace{3mm}

\subfloat[Synchronization error without the designed controller  \label{fig:lle-e}]{
  \includegraphics[width=0.45\textwidth]{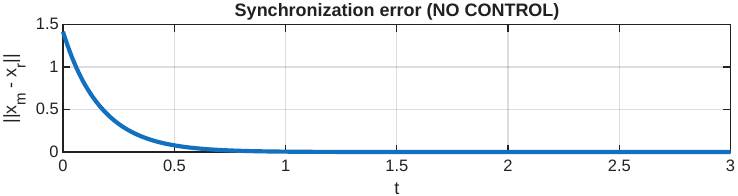}

}\hfill
\subfloat[Synchronization error under the designed controller   \label{fig:lle-f}]{
  \includegraphics[width=0.45\textwidth]{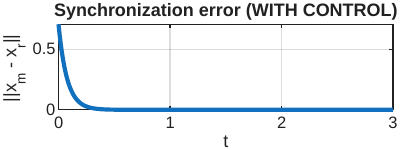}

}

\vspace{3mm}

\subfloat[ Lyapunov function evolution and the Mittag-Leffler stability bound \label{fig:lle-g}]{
  \includegraphics[width=0.45\textwidth]{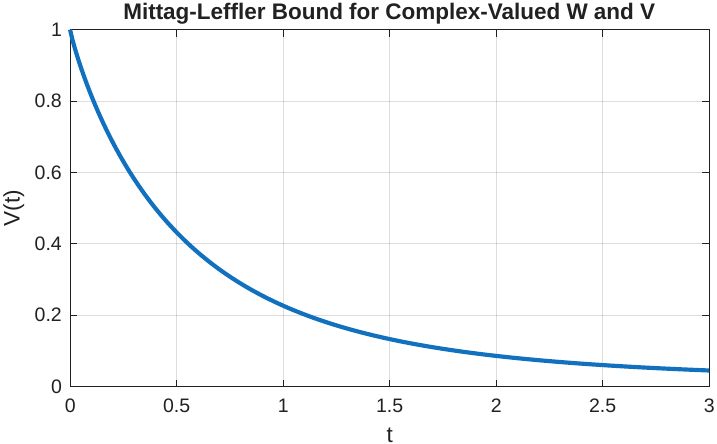}

}\hfill
\subfloat[ Real part of the complex-valued BAM neural network time series   \label{fig:lle-h}]{
  \includegraphics[width=0.45\textwidth]{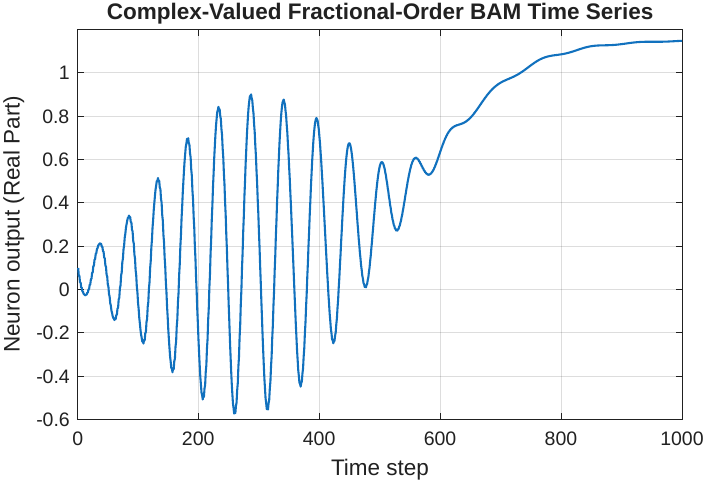}

}

\caption{State trajectories of systems (1), (4), and (6) with and without the designed controller.}
\label{fig:lle-all}

\end{figure*}
\section{Data-Driven Local Instability Analysis: Micro-Ensembles and kNN Proxies}
This combined approach enables a multi-scale characterization of system dynamics, where global synchronization is ensured analytically, and local instability patterns are captured empirically.\\
The analytical condition \(P > 0\) established in Theorem~3.2 guarantees global Mittag--Leffler synchronization, which implies asymptotic decay of the synchronization error. However, this sufficient condition does not explicitly characterize transient or local instability along system trajectories. To address this limitation, the finite-time Lyapunov exponent (FTLE) and kNN-based prediction-error proxies are introduced as data-driven diagnostics. These quantities capture short-horizon divergence and local predictability, thereby providing a trajectory-level complement to the global stability guarantee ensured by \(P > 0\).\\
\indent In nonlinear fractional-order systems, especially those with complex-valued states, computing Lyapunov exponents through variational equations is often impractical because the Jacobian is difficult to evaluate accurately along the trajectory. To address this, the micro-ensemble technique generates several nearby trajectories by applying small perturbations to the reference state. By tracking the short-term growth of distances within this ensemble, we obtain a purely data-driven estimate of the local finite-time Lyapunov exponent (FTLE) without requiring any analytical linearization. This approach provides a high-resolution characterization of local instability and allows us to quantify synchronization behavior more reliably within the fractional-order BAM neural network.
\subsection{Local FTLE estimation from micro-ensembles}
Local FTLE estimation from micro-ensembles offers a fast, data-driven way to measure short-term divergence in complex dynamical systems. By evolving a small set of locally perturbed states, it captures transient sensitivity without requiring full variational equations. This makes it well suited for analyzing nonstationary behavior and localized stability changes.\\
We assume the availability of:
\begin{itemize}
  \item 	A main trajectory  $\{ x_{main} (t_{i})\}_{i=0}^{T-1}$, sampled at times $t_{i}$ with an approximately constant time step  $\Delta t=median (t_{(i+1)}-t_{i})$, where $x_{main}(t_{i})\in \mathbb{R}^{8}$ collects the real and imaginary parts of the four complex BAM variables;
  \item 	A set of micro-ensembles indexed by a reference identifier RefID. For each reference time $t_{ref}$, we have $N$ micro-runs (labelled by Micro $ID=1,\dots,N$) of length $H+1$ steps, starting from small perturbations of the same reference state and recorded at discrete steps $h=0,\dots,H$ with times $t_{ref}+h\Delta t$.
\end{itemize}
	Only reference points that admit at least $N_{min}$ complete micro-runs and at least $H_{min}+1$ aligned steps are retained for analysis.
\subsubsection{Alignment and distance growth}
For each reference time $t_r$:
\begin{enumerate}
    \item Identify the subset of main-trajectory indices $\{i(h)\}_{h=0}^{H}$ such that $
t_{i(h)} = t_{\mathrm{ref}} + h\Delta t.$  This defines the base trajectory segment $x_{\mathrm{main}}\!\left(t_{i(h)}\right)$ used for comparison.

    \item For each micro-run $k = 1,\ldots,N$ and step $h = 0,\ldots,H$, extract the corresponding micro state $ x^{(k)}(h) \in \mathbb{R}^8 .$

    \item Compute the Euclidean distance in state space between each micro-run and the main trajectory at the same time: $
d_k(h) = \left\| x^{(k)}(h) - x_{\mathrm{main}}\!\left(t_{i(h)}\right) \right\|_2,$\\

 $k = 1,\ldots,N,\; h = 0,\ldots,H.$

To avoid numerical issues at very small separations, each $d_k(h)$ is bounded below by a small constant $d_{\min} > 0$, i.e.
$d_k(h) \leftarrow \max\!\left(d_k(h),\, d_{\min}\right).$

    \item 	Define the logarithmic distance for each micro-run and step as $\log d_k(h)$, and average over the ensemble to obtain the logarithm of the geometric mean distance:
\[
L(h) = \langle \log d(h) \rangle
     = \frac{1}{N} \sum_{k=1}^{N} \log d_k(h),
     \qquad h = 0,\ldots,H.
\]

Equivalently, the geometric mean distance is
\[
\overline{d}(h) = \exp\!\big(L(h)\big).
\]

We associate to each step an elapsed time
\[
\tau_h = h\,\Delta t .
\]
\end{enumerate}
	Equivalently, the geometric mean distance is $\bar{d}(h)=exp(L(h))$. We associate to each step an elapsed time $\tau(j)=j\Delta t$.

\subsubsection{One-line vs two-line fit.}
\label{sec:mae_fit}
The local FTLE is extracted from the short-horizon growth of $L(h)$ as a function of time $\tau_h$. Since the behaviour at large horizons may be affected by saturation, nonlinearity or noise, we exclude the initial step $h=0$ from the fit and consider only $h \geq 1$. We then fit either:

\begin{itemize}
    \item A single-line model over all available horizons:

\[
    L(h) \approx a_{\text{all}} \, \tau_h + b_{\text{all}},
    \]

    \item Or a two-line (piecewise linear) model with a breakpoint $\tau^*$:
\[
    L(h) \approx
    \begin{cases}
        a_L \, \tau_h + b_L, & \tau_h \leq \tau^*, \\
        a_R \, \tau_h + b_R, & \tau_h \geq \tau^*,
    \end{cases}
    \]

    with at least two data points in each segment.
\end{itemize}

In both cases, the slopes and intercepts are obtained by ordinary least-squares regression. For the two-line model, we scan all admissible split indices and select the break point $\tau^*$ that minimizes the sum of mean absolute errors (MAE) of the left and right segments. We then compare:

\[
\text{MAE}_{\text{single}} \quad \text{(MAE of the single-line fit)},
\]

\[
\text{MAE}_{\text{two}} = \tfrac{1}{2} \left( \text{MAE}_L + \text{MAE}_R \right) \quad \text{(average MAE of the best two-line fit)}.
\]

The two-line model is accepted if it yields at least a 10\% reduction in MAE, i.e.

\[
\frac{\text{MAE}_{\text{single}} - \text{MAE}_{\text{two}}}{\text{MAE}_{\text{single}}} \geq 0.10.
\]
Otherwise, we fall back to the single-line model.
The 10 \% MAE reduction threshold is adopted as a practical criterion to balance goodness-of-fit and model complexity in the presence of limited and noisy data. This heuristic is consistent with standard model selection principles, where additional model complexity is justified only if it yields a significant improvement in fit quality. We note that alternative criteria such as the Bayesian Information Criterion (BIC) or cross-validation lead to qualitatively similar model selections in our experiments.
\subsubsection{Local Lyapunov proxy}
For each reference time $t_{\text{ref}}$ we define the local FTLE estimate
$\hat{\lambda}_1(t_{\text{ref}})$ as the short-horizon slope:

\[
\hat{\lambda}_1(t_{\mathrm{ref}})=
\begin{cases}
a_{\mathrm{all}}, & \text{if the single-line model is selected},\\[6pt]
a_{L},            & \text{if the two-line model is selected}.
\end{cases}
\]

Thus, $\hat{\lambda}_1(t_{\text{ref}})$ measures the initial exponential growth
(or contraction) rate of the geometric-mean perturbation norm over the ensemble
of micro-runs starting from the reference state. Collecting these values over
all reference points yields a time series $\hat{\lambda}_1(t_{\text{ref}})$ that
characterises the local finite-time Lyapunov behaviour along the main trajectory.

\subsection{Results for Method I: Micro-ensemble FTLE}
The micro-ensemble is generated by applying small perturbations to the reference state. Specifically, for each reference point $x(t_{\text{ref}})$, the initial conditions of the micro-runs are constructed as
\[
x^{(k)}(0) = x(t_{\mathrm{ref}}) + \delta^{(k)},
\]
where $\delta^{(k)}$ are small random perturbation vectors sampled from an isotropic distribution in $\mathbb{R}^d$. The perturbations are normalized to have approximately uniform magnitude, and their norm is chosen to be of order $10^{-3}$ to ensure that the dynamics remain in the local linear regime.

\textbf{\textsl{Method-I: Local FTLE estimation from micro-ensembles}}\\
For the BAM test case, the micro-ensemble FTLE estimator is applied using a simulation time step of
$\Delta t = 0.05$, with $n_{\text{ref}} = 200$ reference states. Each reference point includes at least
$N_{\min} = 5$ micro-runs, and a minimum of $H_{\min} = 3$ forward steps is used for the short horizon
fitting procedure. The number of reference points $n_{\text{ref}}=200$ is selected to provide sufficient coverage of the trajectory for capturing temporal variations in local stability. Empirical tests indicate that increasing this number does not significantly alter the qualitative FTLE patterns, while smaller values reduce resolution.\\

For every reference time $t_{\text{ref}}$, an ensemble of micro-trajectories is generated, and the
geometric mean deviation $\bar{d}(\tau)$ from the main trajectory is computed as a function of elapsed
time $\tau = h\Delta t$. The logarithmic profile is then formed as
$L(h) = \langle \log d(h) \rangle$, which is fitted using either a one line or two line MAE-optimal
linear model for $h \ge 1$. The slope of the selected short horizon segment provides the local FTLE
$\hat{\lambda}_{1}(t_{\text{ref}})$.
For each reference point, multiple micro-runs are generated. Only reference points with at least $N_{\min}=5$ valid micro-trajectories are retained to ensure statistical robustness of the geometric mean distance used in FTLE estimation.

\begin{table}[h!]
\centering
\caption{Micro-ensemble data used for FTLE estimation at two reference times.
Each table lists the short-time offset $\tau$, the geometric mean distance
$d_{\text{geo}}(\tau)$, and its logarithm $\log(d_{\mathrm{geo}}(\tau))$.}
\label{tab:micro ensemble data}
\begin{tabular}{c|ccc|ccc}
\hline
\multirow{2}{*}{Step}
& \multicolumn{3}{c|}{Dataset A ($t_{\text{ref}} \approx 5.65$)}
& \multicolumn{3}{c}{Dataset B ($t_{\text{ref}} \approx 26.75$)} \\ \cline{2-7}
& $\tau$ & $d_{\text{geo}}$ & $\log(d_{\text{geo}})$
& $\tau$ & $d_{\text{geo}}$ & $\log(d_{\text{geo}})$ \\ \hline
0  & 0.00 & 0.052576 & -2.94549  & 0.00 & 0.053063 & -2.93627 \\
1  & 0.05 & 0.077960 & -2.55156  & 0.05 & 0.097035 & -2.33269 \\
2  & 0.10 & 0.147455 & -1.91423  & 0.10 & 0.156026 & -1.85773 \\
3  & 0.15 & 0.263993 & -1.33183  & 0.15 & 0.239766 & -1.42809 \\
4  & 0.20 & 0.417097 & -0.87444  & 0.20 & 0.372642 & -0.98714 \\
5  & 0.25 & 0.572560 & -0.55764  & 0.25 & 0.535969 & -0.62368 \\
6  & 0.30 & 0.702181 & -0.35356  & 0.30 & 0.688700 & -0.37295 \\
7  & 0.35 & 0.809952 & -0.21078  & 0.35 & 0.809440 & -0.21141 \\
8  & 0.40 & 0.929226 & -0.07340  & 0.40 & 0.912190 & -0.09191 \\
9  & 0.45 & 1.093881 &  0.08973  & 0.45 & 1.031822 &  0.03133 \\
10 & 0.50 & 1.305382 &  0.26650  & 0.50 & 1.200184 &  0.18247 \\
11 & 0.55 & 1.528704 &  0.42442  & 0.55 & 1.421782 &  0.35191 \\
12 & 0.60 & 1.719864 &  0.54225  & 0.60 & 1.662246 &  0.50817 \\
\hline
\end{tabular}
\end{table}
In this table \ref{tab:micro ensemble data} these data show how the geometric mean distance between micro-trajectories and the main trajectory grows over the short horizon. The near-linear rise in  $\log d_{\text{geo}} (\tau)$ provides the finite-time Lyapunov exponent at that reference point.

Figure~\ref{fig:ftle-growth}, presents two representative growth curves for
$\text{RefID} = 2$ and $\text{RefID} = 32$. The blue markers indicate the empirical values of $L(h)$,
while the solid lines correspond to the MAE-optimal linear fits chosen through the one vs.~two line
selection process. The red marker at $\tau = 0$ denotes the initial perturbation magnitude; although
shown for completeness, it is excluded from the regression because the FTLE quantifies the divergence
relative to the initial deviation. The slope of the fitted short horizon segment is interpreted as the
local FTLE at each reference state. \\
The observed oscillations in FTLE values indicate alternating phases of local expansion and contraction. In particular, intervals of relatively high FTLE correspond to transient sensitivity to perturbations, although the global synchronization condition $P>0$ ensures that such effects do not persist asymptotically. This highlights the distinction between local instability and global stability in fractional-order systems.
\begin{figure}[H]
\centering

\subfloat[FTLE growth examples for ref002.]{
  \includegraphics[width=0.45\textwidth]{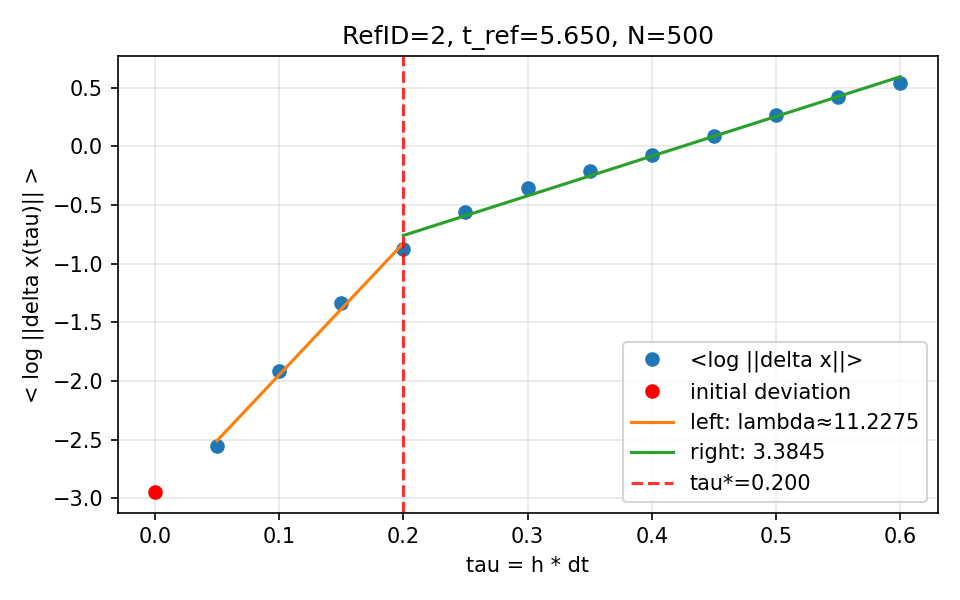}
  \label{fig:figure9}
}
\hfill
\subfloat[FTLE growth examples for ref032.]{
  \includegraphics[width=0.45\textwidth]{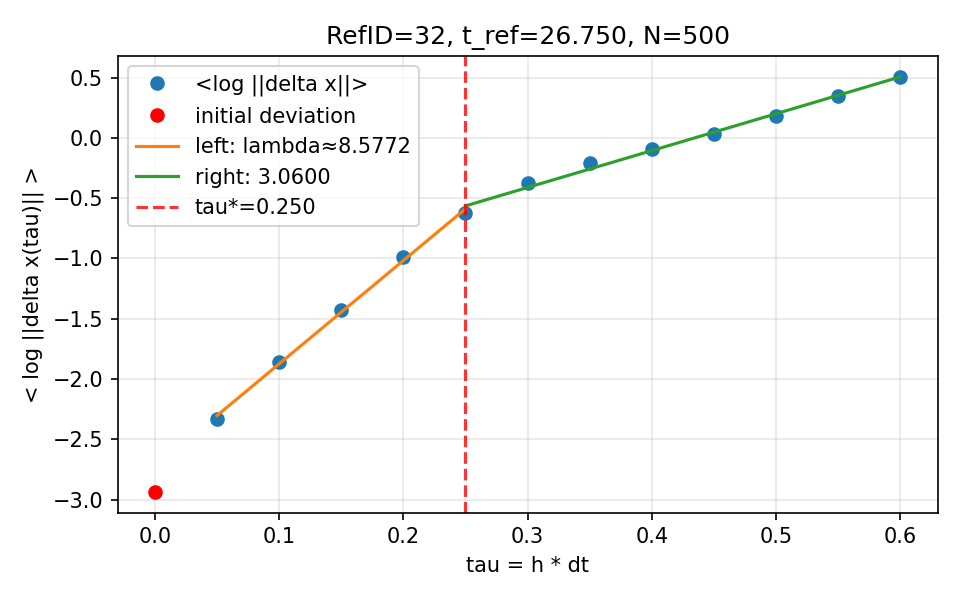}
  \label{fig:figure10}
}

\caption{Examples of ensemble-averaged growth curves
$L(h)=\langle \log \|\delta x(h)\| \rangle$ and their MAE--optimal linear fits.
Blue markers show the empirical profile as a function of $\tau = h\Delta t$, while
the solid lines indicate the selected short-horizon fit (one- or two-line model).
The red marker at $\tau = 0$ denotes the initial ensemble deviation, plotted for
completeness but excluded from the slope estimation.}
\label{fig:ftle-growth}

\end{figure}

Aggregating the local estimates over all reference times yields the FTLE time series
$\hat{\lambda}_{1}(t_{\text{ref}})$ shown in Figure~\ref{fig:figure11}.
The values oscillate between strongly contracting and strongly expanding regimes
(approximately between $4.5$ and $16$) in a nearly periodic, almost sinusoidal
pattern over the full time interval. No significant long term trend is observed;
both the amplitude and shape of the oscillations remain stable, indicating a
persistent quasi periodic modulation of local instability along the main trajectory.
\begin{figure}[htbp]
  \centering
  \includegraphics[width=\textwidth]{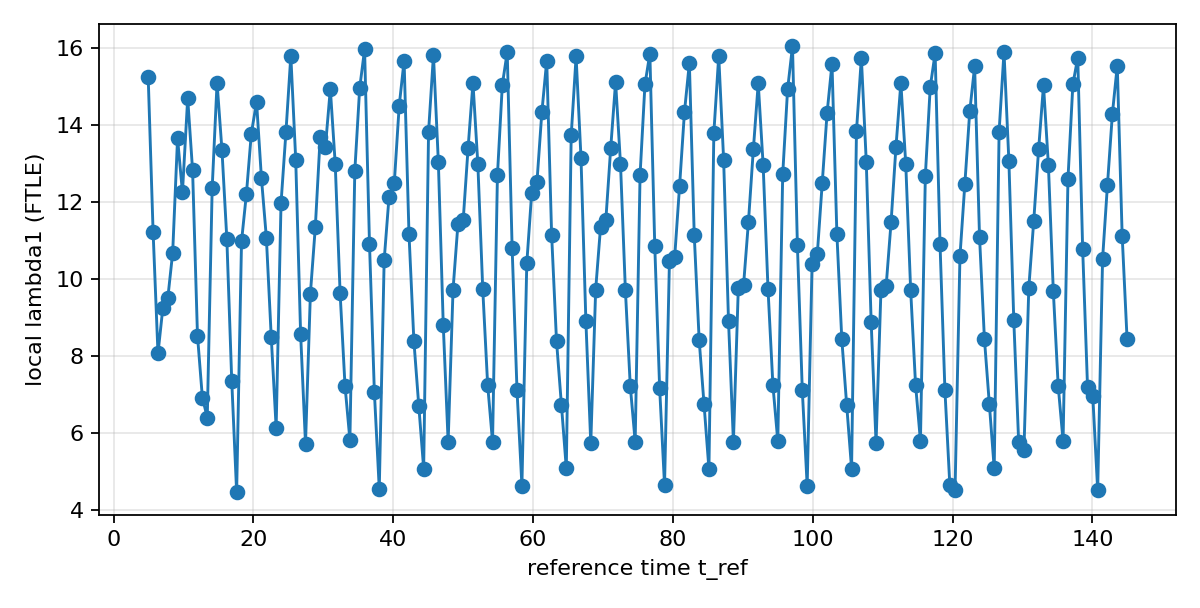}
  \caption{Local FTLE $\hat{\lambda}_{1}(t_{ref})$ estimated from micro-ensembles along the main BAM trajectory. Each marker corresponds to a reference state; the curve reveals a pronounced, nearly sinusoidal modulation of the local growth rate with no appreciable long--term drift.}
  \label{fig:figure11}
\end{figure}
\subsection{Method II: kNN prediction--error Lyapunov proxies}
\noindent\textbf{Rationale and relation to Method I.} The kNN proxy estimates local instability from short-horizon prediction errors rather than from explicit perturbation growth. We use the same MAE-based one-line vs.~two-line fitting and short-horizon slope extraction described in Section~\ref{sec:mae_fit} so that the resulting slopes are directly comparable across the micro-ensemble FTLE and the kNN proxies.

\noindent\textbf{Overview.} Building on the prediction--error Lyapunov proxy introduced in \cite{23}, we implement two kNN-based variants:
\begin{enumerate}[leftmargin=*,itemsep=2pt]
  \item \textbf{Method II-A (Full-state kNN):} predict the full real representation of the complex BAM state, $x(t)\in\mathbb{R}^{8}$ (real and imaginary parts stacked).
  \item \textbf{Method II-B (Modulus kNN):} predict the scalar modulus $r(t)=\|x(t)\|_{2}$, which reduces the output dimension and typically yields smoother error-growth curves.
\end{enumerate}
\noindent In what follows we present the full-state construction first and then the modulus-based modification.

In addition to the FTLE-based construction of the local Lyapunov proxy, we implemented a kNN-based variant that uses short-term prediction errors within each micro-ensemble. The goal is to estimate the local instability around a reference state $x(\text{ref})$ without explicit reference to the main trajectory.
\subsubsection{Micro-ensemble and delay embedding}
For each reference identifier RefID we collect all micro-runs starting from small perturbations of the same base state $x(t_{\text{ref}})$. After truncation to a common length $L_\text{micro}$, each micro-run provides a sequence
We consider
\[
x^{(k)}(j) \in \mathbb{R}^8, \qquad j = 0, \ldots, L_{\text{micro}} - 1,
\]

where $k$ indexes the micro-run and the 8 components correspond to the real and
imaginary parts of the four BAM variables. We fix a delay-embedding length
$m = \texttt{LOOK\_BACK}$ and define the maximal prediction horizon $H_{\max}$ so that

\[
H_{\max} = L_{\text{micro}} - m,
\]

ensuring that all horizons share the same number of usable points.

For each micro-run we construct a single feature vector by stacking the first $m$ states,

\[
u^{(k)} = \big(x^{(k)}(0), \, x^{(k)}(1), \, \ldots, \, x^{(k)}(m-1)\big) \in \mathbb{R}^{8m},
\]

and, for each prediction horizon $h$ in the range $h_{\min} \leq h \leq H_{\max}$,
define the corresponding target state

\[
y_h^{(k)} = x^{(k)}(m+h-1) \in \mathbb{R}^8.
\]

Thus each micro-run contributes one feature vector and a set of future states at all horizons.
\subsubsection{Horizon-wise kNN prediction.}
The kNN method is adopted due to its simplicity and nonparametric nature, making it suitable for small-sample micro-ensemble data. It effectively captures local geometric structure in the state space and provides a direct measure of short-horizon predictability without requiring model training or parameter tuning.\\
Let $K$ be the number of micro-runs available for a given reference time. We split the $K$ samples into a training set and a test set, using a user-specified fraction $\text{test\_size}$ for testing. The effective number of neighbours used by the kNN regressor is

\[
\min(\text{k\_neighbors}, K_{\text{train}}),
\]

so that the method remains well-defined even for small ensembles.

For each horizon $h$ we train an independent kNN model

\[
f_h : \mathbb{R}^{8m} \to \mathbb{R}^8
\]

on the training pairs $\big(u^{(k)}, y_h^{(k)}\big)$. Distance-weighted kNN with Euclidean metric is used throughout.

On the test set we evaluate the prediction error in $\mathbb{R}^8$,

\[
e_h^{(k)} = \left\| y_h^{(k)} - f_h\big(u^{(k)}\big) \right\|_2,
\]

and guard against numerical underflow by replacing any zero error with a small constant

\[
\varepsilon = \text{MIN\_ERROR}.
\]
\subsubsection{Prediction--error growth curve}
For each horizon $h$ we aggregate the test errors by their geometric mean,
We define

\[
G(h) = \exp\!\left( \frac{1}{N_{\text{test}}} \sum_{k \in J} \log e_h^{(k)} \right),
\]

where $J$ is the test set and $N_{\text{test}} = |J|$. We then define

\[
y(h) = \log G(h), \qquad \tau_h = h \, \Delta t,
\]

with $\Delta t$ the sampling interval of the main trajectory.

The resulting profile $y(h)$ quantifies the short-term growth of kNN prediction
error as a function of physical time $\tau$, and plays the same role as
$\langle \log \| \delta x(\tau) \| \rangle$ in the FTLE-based construction.
\subsubsection{Slope extraction and local Lyapunov proxy}
The sequence $\{(\tau_{h},y(h))\}$ is then passed to the same MAE-based one-line vs two-line fitting procedure used for the FTLE micro-ensembles (see previous subsection). In particular, we fit either a single straight line or a piecewise-linear model with one internal breakpoint, select the model according to the MAE reduction criterion, and take the slope of the left (short-horizon) segment as the local kNN-based Lyapunov proxy $\hat{\lambda}_{1}(t_{\text{ref}})$ for that reference state.
\subsubsection{Results for Method II-A (full-state kNN)}
\textbf{\textsl{Method-II: kNN prediction--error Lyapunov proxy}}\\
We first applied the kNN-based Lyapunov estimator to the full complex BAM state,
using the micro-ensemble data described above. In this experiment, the main time
step was $\Delta t = 0.05$, and we used $n_{\text{ref}} = 200$ reference states.
For each reference, we constructed a delay embedding of length
\text{LOOK\_BACK} = 1 (one past vector state as input), a minimal prediction
horizon $h_{\min} = 1$, a test fraction of \text{test\_size} = 0.3 for the
micro-runs, and a nominal number of neighbours \text{k} = 3 for the kNN predictor.
Only references with at least five micro-trajectories were retained for analysis.

\begin{table}[h!]
\centering
\caption{Prediction-error growth data for the kNN-based Lyapunov proxy. Each dataset contains the short-time offset $\tau$, the
prediction-error magnitude $G(\tau)$, and its logarithm $\log G(\tau)$.}
\label{tab:knn growth data}
\begin{tabular}{c|ccc|ccc}
\hline
\multirow{2}{*}{Step}
& \multicolumn{3}{c|}{Dataset A}
& \multicolumn{3}{c}{Dataset B (ref\_032)} \\ \cline{2-7}
& $\tau$ & $G(\tau)$ & $\log G$
& $\tau$ & $G(\tau)$ & $\log G$ \\ \hline
1  & 0.05 & 0.023292 & -3.75966 & 0.05 & 0.022099 & -3.81222 \\
2  & 0.10 & 0.022823 & -3.77997 & 0.10 & 0.021548 & -3.83748 \\
3  & 0.15 & 0.022535 & -3.79267 & 0.15 & 0.021211 & -3.85323 \\
4  & 0.20 & 0.022406 & -3.79843 & 0.20 & 0.021096 & -3.85867 \\
5  & 0.25 & 0.022350 & -3.80092 & 0.25 & 0.021094 & -3.85875 \\
6  & 0.30 & 0.022311 & -3.80268 & 0.30 & 0.021104 & -3.85830 \\
7  & 0.35 & 0.022374 & -3.79987 & 0.35 & 0.021134 & -3.85687 \\
8  & 0.40 & 0.022659 & -3.78721 & 0.40 & 0.021288 & -3.84959 \\
9  & 0.45 & 0.023246 & -3.76162 & 0.45 & 0.021686 & -3.83108 \\
10 & 0.50 & 0.024122 & -3.72461 & 0.50 & 0.022403 & -3.79854 \\
11 & 0.55 & 0.025177 & -3.68183 & 0.55 & 0.023357 & -3.75687 \\
12 & 0.60 & 0.026315 & -3.63762 & 0.60 & 0.024367 & -3.71454 \\
\hline
\end{tabular}
\end{table}
This table \ref{tab:knn growth data} reports the short-horizon growth of the full-state prediction error $G(\tau)$ produced by the kNN model.
The resulting $\log G(\tau)$ curve reflects how small state-space deviations propagate in the high-dimensional system.\\
Figure~\ref{fig:knn-curves},  shows two representative profiles
$y(h) = \log G(h)$ for $\text{RefID} = 2$ and $\text{RefID} = 32$,
together with the MAE-optimal linear fits. In contrast to the FTLE-based
curves, which typically exhibit a nearly linear growth followed by saturation,
the kNN prediction-error curves display a much less regular behaviour.
The values of $\log G(h)$ fluctuate with horizon and may even decrease for
intermediate $h$, so the overall profile deviates significantly from a simple
exponential trend. The fitted lines therefore capture only a coarse local trend
rather than a clean short--horizon growth regime.
\begin{remark}
The numerical results are consistent with the theoretical Mittag-Leffler decay $V(t) \leq V(0)E_\alpha(-Pt^\alpha)$. A detailed quantitative fitting or residual error analysis could further validate this behavior and will be considered in future work.
\end{remark}

\begin{remark}
 The time-varying delay $\tau(t)$ influences the feasibility of the stability conditions through Razumikhin-type constraints. Larger delays may introduce conservatism, but stability is ensured for admissible bounded delays.
\end{remark}

\begin{figure}[H]
\centering

\subfloat[Shape of the prediction-error curves for ref002.]{
  \includegraphics[width=0.45\textwidth]{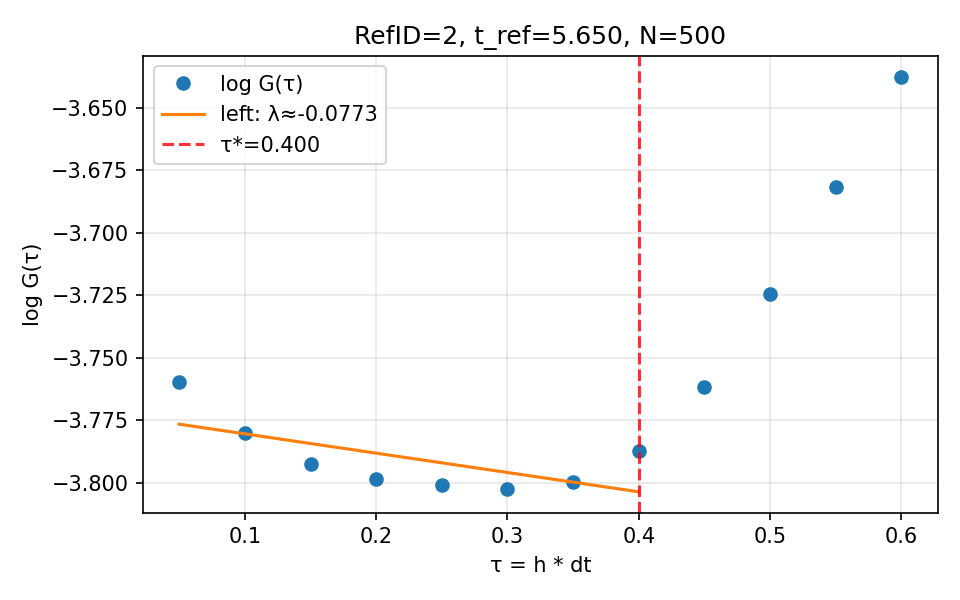}
  \label{fig:figure12}
}
\hfill
\subfloat[Shape of the prediction-error curves for ref032.]{
  \includegraphics[width=0.45\textwidth]{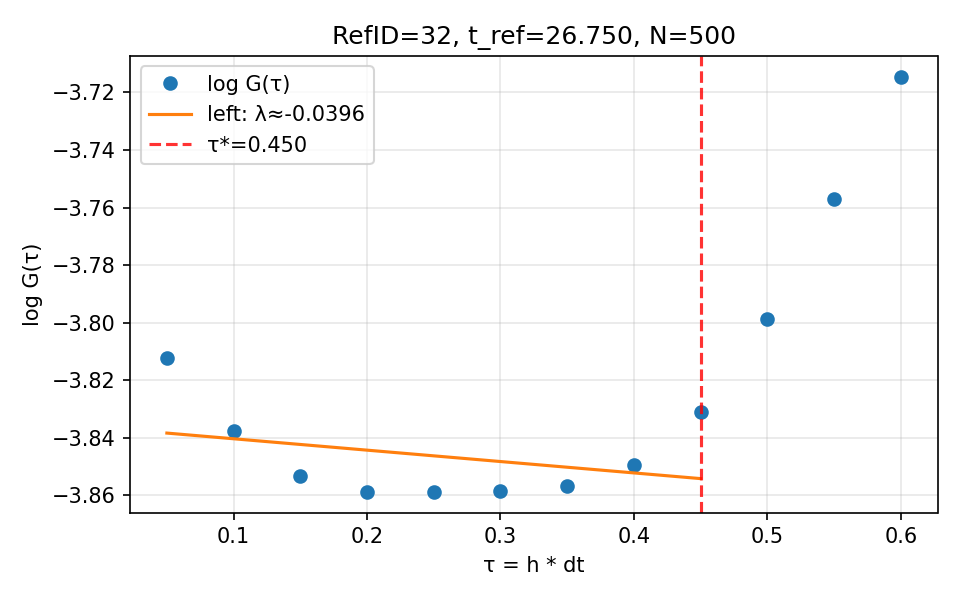}
  \label{fig:figure13}
}

\caption{Examples of kNN-based prediction error curves
$y(h)=\log G(h)$ and their MAE-optimal linear fits for two reference states.
Blue markers show the empirical profile as a function of $\tau=h\Delta t$,
while the solid line indicates the selected short-horizon fit.
The fluctuations reflect the difficulty of predicting all eight state components
from a small micro-ensemble.}
\label{fig:knn-curves}
\end{figure}

This irregularity is consistent with the difficulty of the regression task:
for each horizon, the kNN model must map an 8-dimensional input (or
$8 \times \text{LOOK\_BACK}$ in general) to an 8-dimensional output using
only a small number of micro-runs. With limited training data, the kNN predictor
cannot approximate all eight coordinates with sufficient accuracy, and the
resulting prediction errors are dominated by sampling noise and local nonlinear
effects rather than by a smooth exponential growth.\\
\textbf{\textsl{Time series of kNN-based Lyapunov exponents.}}\\

The slopes extracted from the short horizon fits define the local kNN-based
Lyapunov proxy $\hat{\lambda}_{1}(t_{\text{ref}})$ along the main trajectory
(Figure~\ref{fig:figure14}). In comparison with the FTLE-based series,
the kNN-derived $\hat{\lambda}_{1}$ exhibits a much more irregular pattern:
although a broad oscillatory structure is present, the values fluctuate
substantially from one reference state to the next and do not form a clean
quasi-sinusoidal modulation. This behaviour indicates that the underlying
instability signal is partially obscured by regression noise and finite-sample
variability inherent to the high-dimensional prediction task.

\begin{figure}[htbp]
  \centering
  \includegraphics[width=\textwidth]{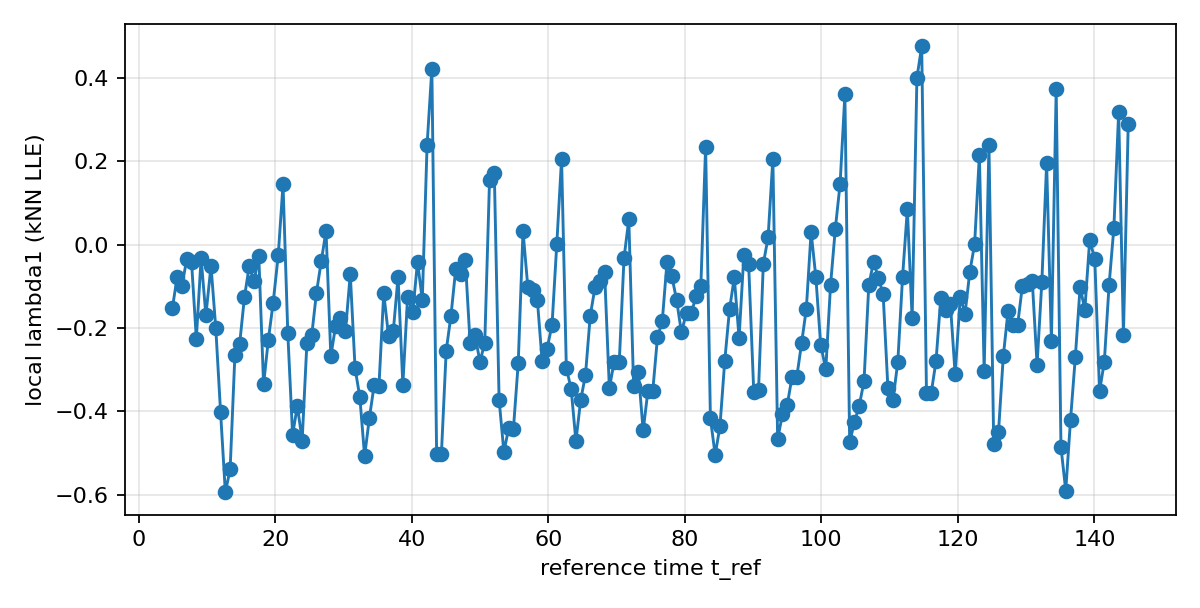}
  \caption{Local Lyapunov proxy $\hat{\lambda}_{1}(t_{\text{ref}})$ obtained from kNN prediction
errors with full-state targets in $\mathbb{R}^{8}$. The time series exhibits sizeable
point-to-point fluctuations and only a weakly visible oscillatory pattern, suggesting
that the high-dimensional regression problem introduces substantial noise into the
slope estimates.}
  \label{fig:figure14}
\end{figure}

\textbf{\textsl{Motivation for scalar-modulus prediction.}}\\
The above results suggest that, with the current ensemble size, the kNN predictor
struggles to provide stable and interpretable error growth for the full
8-dimensional state. To reduce the effective output dimension and improve
robustness, the next subsection switches from predicting all coordinates to
predicting the modulus of the complex BAM variables, and recomputes the
kNN-based Lyapunov proxy using this scalar target.

\subsubsection{Method II-B: Modulus-based kNN proxy}
To reduce the effective output dimension of the kNN regression problem, we consider a scalar variant of the micro--ensemble estimator in which we work with the Euclidean norm (modulus) of the state vector rather than with all components. For each BAM state  $x(t)\in \mathbb{R}^{d} $ (here d=8) we define
\begin{equation*}
  r(t)=\|x(t) \|_{2}
\end{equation*}
and apply the kNN-based prediction--error construction directly to the scalar process $r(t)$.
\subsubsection{Micro-ensemble preprocessing.}
For every reference identifier $\text{RefID}$ with reference time $t_{\text{ref}}$
we collect all micro-runs that start from small perturbations of the same base state
$x(t_{\text{ref}})$. After sorting by the discrete step index, each micro-run provides
a short sequence $ \{ x^{(k)}_l \}_{l=0}^{L_{\text{micro}}-1}.$\\
We then:
\begin{enumerate}
    \item Determine the minimal length $L_{\text{micro}}$ across micro-runs for this reference
    and truncate all runs to this common length.

    \item For each micro-run $k$ compute the modulus time series

\[
    r^{(k)}_l = \| x^{(k)}_l \|_2, \qquad l = 0, \ldots, L_{\text{micro}} - 1,
    \]

    and store it as a 1D array of length $L_{\text{micro}}$.

    \item Discard references with fewer than
    $N_{\min} = \text{MIN\_MICRO\_PER\_REF}$ usable micro-runs or with
    $L_{\text{micro}} \leq \text{LOOK\_BACK} + 1$.
\end{enumerate}
\subsubsection{Delay embedding and horizon-wise regression}
Let $m = \text{LOOK\_BACK}$ be the number of past moduli used as features.
For each reference we set the maximal prediction horizon so that

\[
H_{\max} = L_{\text{micro}} - m,
\]

and choose an integer $h_{\min} \geq 1$.

For each micro-run we construct a single feature vector

\[
u^{(k)} = \big(r^{(k)}_0, r^{(k)}_1, \ldots, r^{(k)}_{m-1}\big) \in \mathbb{R}^m,
\]

and, for each horizon $h \in [h_{\min}, H_{\max}]$, define the scalar target

\[
y_h^{(k)} = r^{(k)}_{m+h-1}.
\]

Thus every micro-run contributes one input--output pair for each prediction horizon.
Across all micro-runs we assemble a design matrix

\[
U \in \mathbb{R}^{K \times m}
\]

and horizon-specific target vectors

\[
y_h \in \mathbb{R}^K,
\]

where $K$ is the number of micro-runs available for the reference.

We split the rows into a training set and a test set using a fixed fraction
$\text{TEST\_SIZE}$ for the test set. For each horizon $h$ we then train a scalar kNN regressor

\[
f_h : \mathbb{R}^m \to \mathbb{R},
\]

on the training pairs $\big(u^{(k)}, y_h^{(k)}\big)$. The regressor uses Euclidean distance,
distance-based weights, and an effective number of neighbours

\[
K_{\text{eff}} = \min(\text{K\_NEIGHBORS}, K_{\text{train}}),
\]

so that $K_{\text{eff}}$ never exceeds the train set size.
\subsubsection{Prediction--error growth on the modulus}
On the test set we evaluate the absolute prediction error for each horizon,

\[
e_h^{(k)} = \left| y_h^{(k)} - f_h\big(u^{(k)}\big) \right|,
\]

and bound it away from zero by a small constant

\[
\varepsilon = \texttt{MIN\_ERROR}
\]

to avoid numerical issues in the logarithm.

The geometric mean of the absolute errors at horizon $h$ is

\[
G(h) = \exp\!\left( \frac{1}{N_{\text{test}}} \sum_{k \in J} \log e_h^{(k)} \right),
\]

where $J$ denotes the set of test micro-runs and $N_{\text{test}} = |J|$.

We then form the logarithmic error profile

\[
l(h) = \log G(h), \qquad \tau_h = h \, \Delta t,
\]

with $\Delta t$ the sampling interval of the main trajectory.

The sequence $\{(\tau_h, l(h))\}$ is analogous to the FTLE-based growth curve
obtained from the ensemble of perturbation norms.
\subsubsection{Slope extraction and Lyapunov proxy}
The curve $l(h)$ as a function of $\tau_h$ is passed to the same MAE-based
one-line vs two-line fitting scheme as in the FTLE micro-ensemble estimator
(see previous subsection). In particular, we either fit a single straight line
over all horizons or a piecewise-linear model with one internal breakpoint,
select the model according to the MAE reduction criterion, and take the slope
of the left (short-horizon) segment as the local modulus-based Lyapunov proxy

\[
\hat{\lambda}_1^{(\text{mod})}(t_{\text{ref}}).
\]

Collecting these values for all reference times yields a time series

\[
\hat{\lambda}_1^{(\text{mod})}(t_{\text{ref}})
\]

that can be compared directly with the FTLE-based and full-state kNN-based
Lyapunov estimates.
\subsubsection{Results for Method II-B (modulus kNN)}
\textbf{\textsl{Method-II : Modulus-based kNN Lyapunov proxy}}\\
We now apply the kNN prediction--error estimator to the scalar modulus
$r(t) = \|x(t)\|_{2}$ of the BAM state, using the configuration described in the
previous subsection. For each reference time $t_{\text{ref}}$, we construct
delay-embedded inputs from the past three values of $r$, use a fraction of $0.3$
of the micro-runs for testing, and train distance-weighted kNN regressors with
up to three neighbours for all prediction horizons. The main time step is
$\Delta t \approx 0.05$, and we consider $n_{\text{ref}} = 200$ reference states.

\textbf{Shape of the modulus error curves.}\\
Figure~\ref{fig:fig15fig16},  shows two representative profiles of the
logarithmic geometric-mean prediction error, $l(h) = \log G(h)$, for
$\text{RefID} = 2$ and $\text{RefID} = 32$. In contrast to the full-state kNN
curves, the modulus-based error profiles exhibit a much more regular structure.
At short horizons, $l(h)$ increases approximately linearly with
$\tau = h \Delta t$, indicating an initial exponential growth of the prediction
error. At larger horizons, the curves bend and gradually saturate, as expected
when the prediction error approaches the typical spread of the modulus within
the micro-ensemble. This saturation behaviour is qualitatively similar to that
observed in the FTLE-based micro--ensemble curves for the full state.

The MAE-based one  vs.\ two line fitting procedure identifies a clear
short-horizon segment in which the linear approximation remains accurate, and
the corresponding slopes provide stable local estimates of the modulus-based
Lyapunov proxy $\hat{\lambda}_{1}^{(\mathrm{mod})}(t_{\text{ref}})$.
\begin{table}[h!]
\centering
\caption{Modulus-based kNN prediction-error data (Method~II-B).
Each dataset lists the short-time offset $\tau$, the modulus prediction-error
$G(\tau)$, and its logarithm $\log G(\tau)$.}
\label{tab:modulus knn data}
\begin{tabular}{c|ccc|ccc}
\hline
\multirow{2}{*}{Step}
& \multicolumn{3}{c|}{Dataset A}
& \multicolumn{3}{c}{Dataset B (ref\_032)} \\ \cline{2-7}
& $\tau$ & $G(\tau)$ & $\log G$
& $\tau$ & $G(\tau)$ & $\log G$ \\ \hline
1  & 0.05 & 0.0009955 & -6.91223 & 0.05 & 0.0011066 & -6.80646 \\
2  & 0.10 & 0.0017557 & -6.34489 & 0.10 & 0.0019784 & -6.22547 \\
3  & 0.15 & 0.0025197 & -5.98361 & 0.15 & 0.0028916 & -5.84594 \\
4  & 0.20 & 0.0028795 & -5.85014 & 0.20 & 0.0033890 & -5.68723 \\
5  & 0.25 & 0.0035150 & -5.65072 & 0.25 & 0.0037889 & -5.57567 \\
6  & 0.30 & 0.0042279 & -5.46606 & 0.30 & 0.0047184 & -5.35628 \\
7  & 0.35 & 0.0053321 & -5.23401 & 0.35 & 0.0055196 & -5.19945 \\
8  & 0.40 & 0.0061914 & -5.08460 & 0.40 & 0.0070465 & -4.95522 \\
9  & 0.45 & 0.0070002 & -4.96182 & 0.45 & 0.0079738 & -4.83159 \\
10 & 0.50 & 0.0066775 & -5.00902 & 0.50 & 0.0074034 & -4.90582 \\
\hline
\end{tabular}
\end{table}

These values show  in the table \ref{tab:modulus knn data} the prediction-error growth in the scalar modulus $\| x(t)\|$, yielding smoother and more stable curves than the full-state method. The monotonic increase in $\log (G(\tau))$ provides a clearer short-horizon Lyapunov proxy for the modulus dynamics.

\begin{figure}[H]
\centering

\subfloat[Shape of the modulus error curves for ref002.]{
  \includegraphics[width=0.45\textwidth]{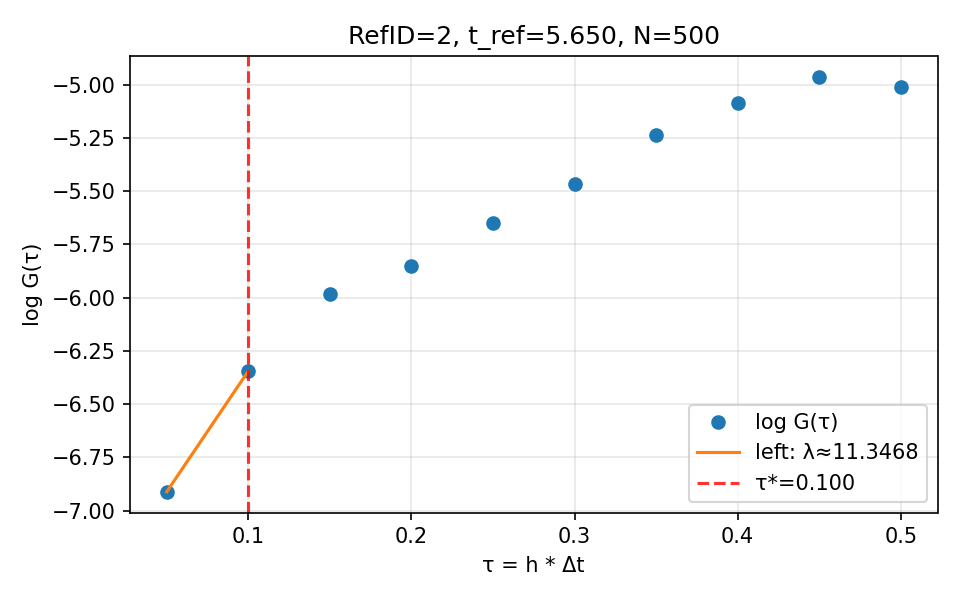}
  \label{fig:figure15}
}
\hfill
\subfloat[Shape of the modulus error curves for ref032.]{
  \includegraphics[width=0.45\textwidth]{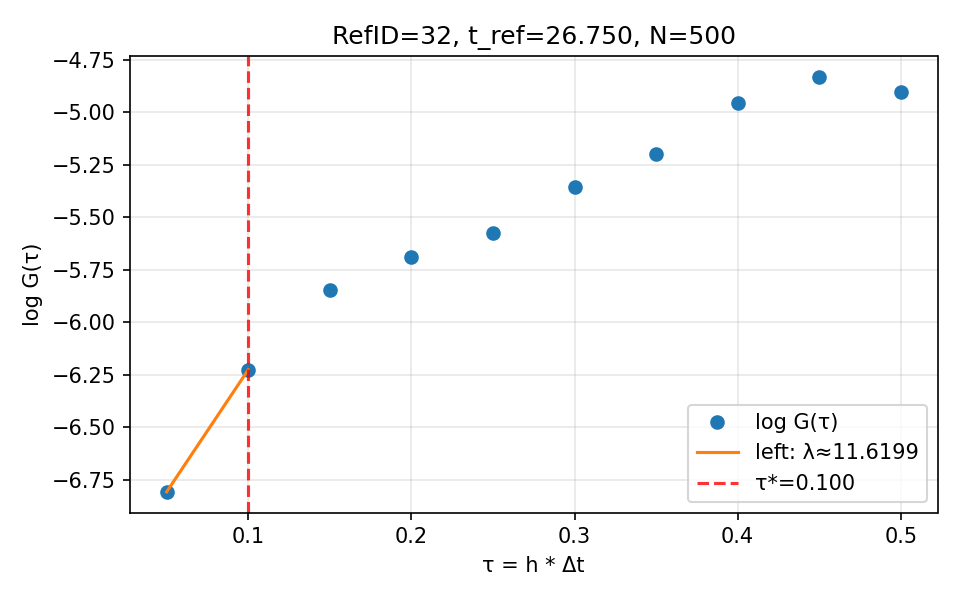}
  \label{fig:figure16}
}

\caption{Examples of modulus-based kNN prediction--error curves
$l(h)=\log G(h)$ and their MAE--optimal linear fits for two reference states.
Blue markers show the empirical profile as a function of $\tau=h\Delta t$,
while the solid line indicates the selected short--horizon fit.}
\label{fig:fig15fig16}
\end{figure}

\textbf{\textsl{Time series of modulus-based Lyapunov exponents.}}
Collecting the short-horizon slopes over all reference times yields the
modulus-based Lyapunov proxy $\hat{\lambda}_{1}^{(\mathrm{mod})}(t_{\text{ref}})$, shown in
Figure~\ref{fig:figure17}. The resulting series fluctuates within a
well-defined range of values comparable in magnitude to the FTLE-based estimates,
confirming that the modulus carries a coherent signal of local instability.
However, the fluctuations are less clearly quasi-periodic than in the FTLE time
series: although some broad oscillatory structure is visible, the modulation is
partly irregular and does not follow a simple near-sinusoidal pattern. This
suggests that the modulus-based kNN estimator captures the overall level of local
stretching and contraction but remains more sensitive to local variability in the
micro-ensemble than the FTLE construction.\\
\begin{figure}[htbp]
  \centering
  \includegraphics[width=\textwidth]{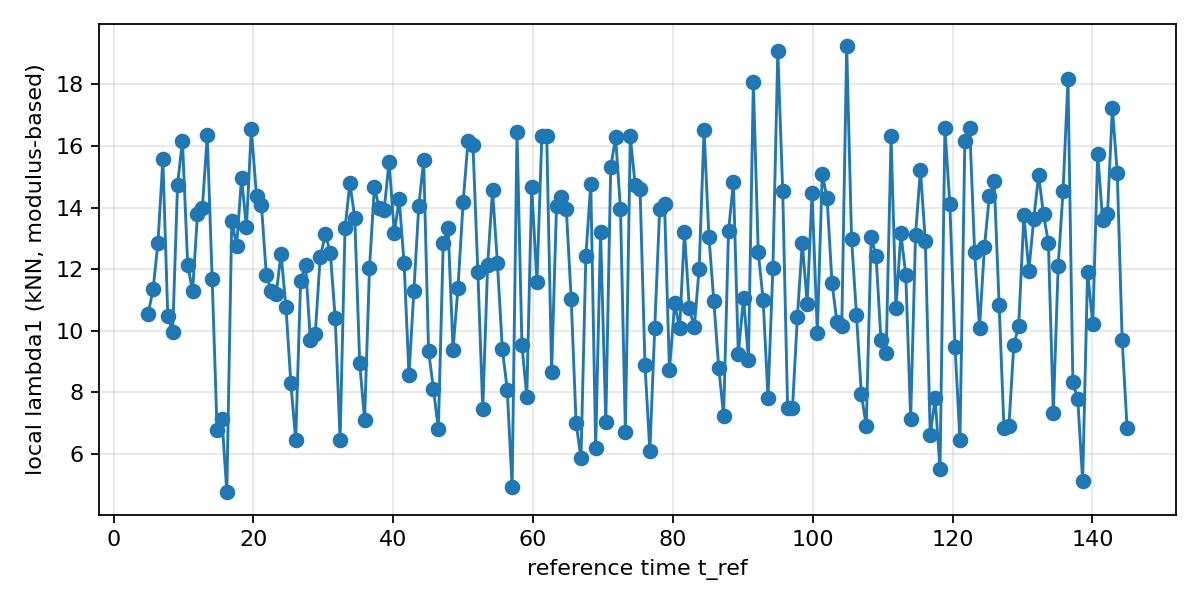}
  \caption{Local modulus-based Lyapunov proxy $\hat{\lambda}_{1}^{(\mathrm{mod})}(t_{\text{ref}})$
obtained from kNN prediction errors on the scalar modulus $r(t)$. The values remain
within a finite, physically reasonable range and fluctuate along the trajectory,
but the oscillations are less clearly periodic than in the FTLE-based estimate,
reflecting a combination of dynamical modulation and local ensemble variability.}
  \label{fig:figure17}
\end{figure}

\subsection{Summary and comparison of the data-driven proxies}
\indent Table \ref{tab:3method compact} summarizes the main similarities and differences between the three numerical Lyapunov-proxy techniques used in Sections 5. The FTLE method (Method I) relies on micro-ensemble divergence and provides the most reliable estimate because it captures the true exponential separation of nearby trajectories. The kNN full-state method (Method II-A) operates in the full eight-dimensional state space and exhibits strong noise sensitivity, which leads to unstable slopes and weak interpretability. In contrast, the modulus-based kNN method (Method II-B) applies the prediction-error approach to the scalar state modulus, yielding smoother curves and more stable short-horizon growth. Overall, the FTLE method gives the best performance, the full-state kNN method is the least reliable, and the modulus-based kNN method provides a good compromise between dimensionality reduction and numerical stability.
\begin{table}[h!]
\centering
\caption{Compact comparison of the three Lyapunov--proxy methods.}
\label{tab:3method compact}
\begin{tabular}{l|c|c|c}
\hline
\textbf{Feature}
& \textbf{FTLE}
& \textbf{kNN Full}
& \textbf{kNN Modulus} \\
\hline

Dim.
& 8-D
& 8-D
& 1-D \\

Data
& Main + micro
& Micro only
& Micro only \\

Measure
& $d_{\mathrm{geo}}$
& $G(\tau)$
& $G(\tau)$ \\

Range
& $0.05\!\to\!1.7$
& $0.022\!\to\!0.026$
& $0.001\!\to\!0.007$ \\

Log-range
& $-2.95\!\to\!0.54$
& $-3.86\!\to\!-3.64$
& $-6.91\!\to\!-4.96$ \\

Stability
& High
& Low
& Med--high \\

Curve
& Clean rise
& Noisy
& Smooth \\

Interp.
& Strong
& Weak
& Good \\

Overall
& \textbf{Best}
& \textbf{Poor}
& \textbf{Good} \\
\hline
\end{tabular}
\end{table}

\section{Conclusion}
This paper presented a unified analytical and data-driven framework for synchronization and local stability assessment in fractional-order complex-valued BAM neural networks. Using fractional Lyapunov theory together with Mittag--Leffler functions, we derived sufficient conditions guaranteeing global Mittag--Leffler (GML) synchronization of the drive--response pair under a linear error-feedback controller, and obtained a conservative explicit time-to-tolerance estimate from a standard Mittag--Leffler bound. The numerical experiments corroborated the theoretical analysis, showing rapid decay of synchronization errors and consistent regulation of both real and imaginary components of the BAM states.

Beyond the model-based guarantees, we introduced two practical Lyapunov proxies that operate directly on trajectory data. The micro-ensemble FTLE estimator captures short-horizon divergence through the geometric-mean growth of perturbations and reveals oscillatory transient-instability patterns typical for fractional-order dynamics. In parallel, we implemented a kNN prediction-error proxy in two complementary regimes: \text{Method II-A} (full-state kNN), which forecasts the complete real-imaginary stacked state in $\mathbb{R}^{8}$ and preserves component coupling information, and \text{Method II-B} (modulus kNN), which forecasts a scalar magnitude and provides a lower-dimensional, typically smoother error-growth signature. Both kNN variants consistently reflected the stabilizing impact of the designed controller and agreed qualitatively with the FTLE-based diagnostics, thereby strengthening the empirical validation of the proposed synchronization framework.

Overall, the combination of fractional-order synchronization theory, explicit convergence-rate estimates, and trajectory-driven Lyapunov proxies bridges model-based and data-driven perspectives for complex-valued fractional-order neural networks, with relevance to secure communications and nonlinear signal processing where reliable synchronization and stability diagnostics are essential.

\textbf{Future work.} Several directions appear promising: (i) extending the analysis to large-scale and heterogeneous BAM architectures (including network-of-networks settings) and to uncertain parameters, disturbances, and time-varying delays; (ii) developing adaptive or gain-scheduled controllers informed by online FTLE/kNN indicators to improve transient performance while maintaining theoretical guarantees; (iii) studying robustness of the micro-ensemble and kNN proxies under measurement noise and limited ensemble size, including automated selection of embedding and neighborhood parameters; and (iv) benchmarking against additional data-driven stability tools (e.g., kernel methods or Koopman-based predictors) and validating the approach on real-world signals or hardware-in-the-loop implementations. The proposed framework can be extended to other fractional-order systems, including chaotic oscillators and memristive neural networks, due to its reliance on general Lyapunov-based analysis and LMI techniques. Such extensions may require suitable modifications to accommodate system-specific nonlinearities and will be considered in future work.

\section*{Acknowledgments}
This research was supported by the Russian Science Foundation (grant no. 22-11-00055-P, \url{https://rscf.ru/en/project/22-11-00055/}, accessed on 10 June 2025).
The authors also wish to express sincere gratitude to the editor and reviewers for their insightful and constructive feedback on the manuscript.

\section*{Data availability}
The data supporting the findings of this study are included within this article.

\section*{Conflict of interest}
The authors declare that there are no conflicts of interest related to the publication of this article.

\end{document}